\newcommand{\nopath}{\mathcal{L}}
\newcommand{\struct}{$\delta$-path}
\newcommand{\Hpath}{\mathrm{pa}}
\newcommand{\Hcycle}{\mathrm{cyc}}
\newcommand{\partition}{\Pi}
\newcommand{\weakex}{{\sc NTU-PMG-WC-Exist}}
\newcommand{\weakexforb}{{\sc NTU-PMG-WC-Exist-F}}
\newcommand{\weakverif}{{\sc NTU-PMG-WC-Verif}}
\newcommand{\strongex}{{\sc NTU-PMG-SC-Exist}}
\newcommand{\strongverif}{{\sc NTU-PMG-SC-Verif}}
\newcommand{\new}{\textcolor{black}}
\newcommand{\NO}{NO}
\newcommand{\pedge}{e}
\newcommand{\Zset}{\mathbb{Z}}
\newcommand{\SC}{\mathcal{SC}}
\newcommand{\nocyc}{\mathcal{K}}
\newcommand{\altpath}{$M_0$-alternating path}
\newcommand{\altcyc}{$M_0$-alternating cycle}
\newcommand{\kstar}{\mathcal{K}^*}
\providecommand{\keywords}[1]
{
  \small	
  \textbf{\textit{Keywords---}} #1
}
\theoremstyle{plain}
\newtheorem{theorem}{Theorem}
\newtheorem{lemma}[theorem]{Lemma}
\newtheorem{cor}[theorem]{Corollary}
\newtheorem{claim}[theorem]{Claim}
\newtheorem{example}[theorem]{Example}
\theoremstyle{definition}
\newtheorem{definition}[theorem]{Definition}
\title{The NTU Partitioned Matching Game for International Kidney Exchange Programs}
\author{Gergely Csáji \thanks{E\"otv\"os Loránd University and HUN-REN KRTK KTI {\tt csaji.gergely@krtk.hun-ren.hu}} \and Tamás Király\thanks{
HUN-REN--ELTE Egerv\'ary Research Group,
Department of Operations Research, E\"otv\"os Lor\'and University, Budapest. Email: {\tt tamas.kiraly@ttk.elte.hu}}
\and Zsuzsa Mészáros-Karkus\thanks{Department of Operations Research, E\"otv\"os Lor\'and University, Budapest. Email: {\tt karkuszsuzsi@gmail.com}}
}
\date{September 2025}
\begin{document}

\maketitle

\begin{abstract}
Motivated by the real-world problem of international kidney exchange (IKEP), recent literature introduced a generalized transferable utility matching game featuring a partition of the vertex set of a graph into players, and analyzed its complexity.
We explore the non-transferable utility (NTU) variant of the game, where the utility of players is given by the number of their matched vertices. Our motivation for studying this problem is twofold. First, the NTU version is arguably a more natural model of the international kidney exchange program, as the utility of a participating country mostly depends on how many of its patients receive a kidney, which is non-transferable by nature. Second, the special case where each player has two vertices, which we call the NTU matching game with couples, is interesting in its own right and has intriguing structural properties.

We study the core of the NTU game, which suitably captures the notion of stability of an IKEP, as it precludes incentives to deviate from the proposed solution for any possible coalition of the players. 
We prove computational complexity results about the weak and strong cores under various assumptions on the players. 
In particular, we show that if every player has two vertices, then the weak core is always nonempty, and the existence of a strong core solution can be decided in polynomial time. Moreover, one can efficiently optimize on the strong core. In contrast, it is NP-hard to decide whether the strong core is empty when each player has three vertices. We also show that if the number of players is constant, then the non-emptiness of the weak and strong cores is polynomial-time decidable, and we can find a minimum-cost core solution in polynomial time.
\end{abstract}

\keywords{graph algorithms, matching, cooperative game, kidney exchange}

\section{Introduction}

We study the \emph{NTU Partitioned Matching Game}, which is a cooperative game defined as follows. Let $G = (V;E)$ be a graph, and let $V_1, V_2, \dots, V_m$ be a partition of $V$ into $m$ subsets, where $m$ is the number of players. Each class of this partition corresponds to a player. The utility of a subset $\{i_1,\dots,i_k\}$ of players is defined as the maximum size of a matching in the subgraph of $G$ induced by the vertex set $V_{i_1}\cup V_{i_2} \cup \dots \cup V_{i_k}$. The transferable utility (TU) version of this game was introduced by  
Kern et al.~\cite{kern2019generalized}. The non-transferable utility (NTU) version, where the utility of a matching $M$ for player $i$ is the number of vertices of $V_i$ covered by $M$, has been considered in a technical report by a subset of the present authors \cite{egres-19-12}, as well as by Collette \cite{collette2023weak}.

In this paper, we focus on computational complexity problems related to the weak and strong cores of the NTU game. Our motivation for studying the core is twofold. First, International Kidney Exchange Programs (IKEPs) can be naturally modeled as a partitioned matching game, and considering the NTU core is an appealing way to strengthen the concept of individual rationality, which takes into account the possibility that some countries might have the incentive to abandon the program together and arrange the exchanges among themselves. The second motivation comes from theoretical considerations: the computational complexity of the core has particularly interesting properties in the case where every partition class has size at most 2, which we call the \emph{NTU matching game with couples}. In particular, we can prove that the non-emptiness of the strong core can be decided in polynomial time. The proof is quite involved and requires complex arguments from matching theory. In the following, we describe these two motivations in detail.

\subsection{Motivation: international kidney exchange programs}

Kidney exchange programs (KEPs) enable patients with incompatible live donors to receive transplants by swapping donors with other incompatible pairs. These exchanges, typically organized into 2-way or 3-way cycles, have facilitated thousands of life-saving transplants worldwide \cite{biro2019building}. To improve access and efficiency, some national programs have begun to collaborate through international kidney exchange programs (IKEPs), pooling patient-donor pairs across borders \cite{bohmig2017czech,valentin2019international}. While international coordination substantially expands the set of potential matches and can increase both the number and quality of transplants, it also introduces new challenges --- most notably, how to incentivize countries (or hospitals, \new{for example in the US setting}) to actively participate in these programs.

These incentive issues are not merely theoretical. In the US, the kidney exchange market facilitates hundreds of transplants annually, yet remains fragmented and inefficient \cite{agarwal2019market}. Hospitals often withhold easy-to-match pairs, participating only with hard-to-match patients, which undermines the overall efficiency of the system. A similar pattern appears in Europe, where IKEPs are often limited to hard-to-match pairs that were not matched during national kidney exchange programs~\cite{biro2019building}, greatly reducing their effectiveness \cite{druzsin2024performance}. Institutional fragmentation and misaligned incentives therefore remain a central obstacle to improving outcomes in both national and international kidney exchange settings.

\new{To avoid serious challenges that arise when trying to ensure strategy-proofness (in particular, the impossibility of achieving both strategy-proofness and maximality or near-maximality of the number of transplants}~\cite{ashlagi2012new,ashlagi2014free,roth2005pairwise}), in this paper we adopt the framework of cooperative game theory, modeling IKEPs as partitioned matching games where players -- typically countries or hospitals-- seek to maximize the number of their own patients who receive a transplant, similarly as in~\cite{kern2019generalized}. We focus on pairwise exchanges (2-cycles), and assume unweighted utility, where each matched patient contributes equally to the country's objective. Due to medical and logistical constraints, allowing only pairwise exchanges is adopted by many countries, such as Denmark, Norway, Sweden and France~\cite{biro2019building,biro2021modelling}, where the former three also share cooperation internationally.
The underlying matching problem is represented by an undirected graph $G=(V,E)$, where vertices denote patient-donor pairs and edges indicate mutually compatible exchanges.

 As opposed to Kern et al.~\cite{kern2019generalized}, in this paper we focus on the non-transferable utility (NTU) setting, which is a natural fit for kidney exchange: the utility (number of patients matched) cannot be transferred among players. This is also in line with how the literature for the US market models the utilities of the participants~\cite{ashlagi2015mix,ashlagi2011individual,ashlagi2014free,roth2005pairwise,roth2005transplant}. We investigate the existence and computational complexity of core solutions in NTU partitioned matching games, particularly exploring two important stability notions: the strong core, where no coalition can weakly block the solution (meaning only one participant has to improve), and the weak core, where no coalition can strongly block (i.e. all participants must improve) it.

Beyond the international setting, our framework also applies to national KEPs like the US setting, where the players may represent hospitals rather than countries~\cite{sonmez2013market}. In such cases, each institution typically controls only a small number of patient-donor pairs. This motivates the study of partitioned matching games where each player controls a bounded number of vertices.

\subsection{Motivation: NTU matching game with couples}

The NTU Partitioned Matching Game has many remarkable properties also from the point of view of algorithmic game theory and computational complexity. It turns out that the most interesting case from this point of view is when each player controls two vertices of the graph, which we call the \emph{NTU matching game with couples} (we remark that the case where each partition class has size \emph{at most two} can be easily reduced to the case where each partition class has size \emph{exactly two}, by introducing dummy vertices). Although this case is not very relevant to the kidney exchange problem, we believe that it is interesting on its own right, and its analysis requires sophisticated arguments based on matching theory. To have an intuitive understanding of this game, we can imagine some couples who come together to play some sport, for example squash, with two-player matches. Each individual has binary preferences, i.e. there are people they like to play with, and there are those they prefer not to play with (we assume that these preferences are mutual, and of course it is possible that a couple prefers not to play with each other). The task is to organize the matches in a way that is acceptable to all couples, where the best option for a couple is that both of them play with a preferred partner, and the second best option is that one of them plays with a preferred partner. In this setting, the core is a natural and intuitive solution concept, because it means that no subset of couples has the incentive to leave the group and organize their matches among themselves.

As we will see, the weak core is always non-empty in this case, but the non-emptiness of the strong core is an intriguing question that is intimately related to the structure of paths that alternate between edges of the graph and ``edges'' corresponding to couples. Furthermore, although the previous statement suggests that the structure of the weak core is simpler than that of the strong core, this is in fact not the case: it turns out that optimizing a linear function on the weak core is NP-hard, while the same problem on the strong core is solvable in polynomial time.

Our results on this problem fit into a long sequence of results on matching problems with couples, where the introduction of joint preferences of couples leads to intriguing complexity questions \cite{ronn1990np,aldershof1996stable,klaus2005stable,marx2011stable,mcbride2013hospitals,biro2016matching,nguyen2018near,csaji2023couples}.

\subsection{Our results}

The motivations mentioned above justify the study of the computational complexity both for instances where the size of each partition class is bounded, and for instances where the number of partition classes (i.e., players) is bounded.
Our main results are as follows:
\begin{itemize}

    \item For the NTU matching game with couples, we show that the weak core is always nonempty and a weak core solution can be found in polynomial time. As the technically most involved result of the paper, we construct a polynomial-time algorithm for deciding if the strong core is empty, and for finding a strong core solution, if it exists. The proof uses an analysis of the structure of paths and cycles that are alternating between edges of the graph and ``edges'' corresponding to the couples. Our results also imply that we can efficiently optimize on the strong core, i.e., for any given edge costs, we can find a minimum cost matching in the strong core in polynomial time. This is remarkable since optimizing on the weak core turns out to be an NP-hard problem.

    \item We show that if the number of players is bounded by a constant, then the problems of checking membership in the weak and strong cores, and of finding minimum cost elements in the weak and strong cores are all polynomial-time solvable. 

    \item In contrast to the above results, we show that problems where the size of each partition class is bounded by a constant are hard even for fairly small constants. In particular, deciding whether a matching is in the weak or strong core is coNP-complete even when each partition class has size at most 3. Similarly, deciding the weak or strong core's emptiness also becomes NP-hard, even for small sizes.
\end{itemize}


Our results provide useful new tools for designing mechanisms for IKEPs. In particular, as most IKEPs consist of only a few countries, our results imply that we can efficiently find strong and weak core solutions for the NTU version of the partitioned matching game in those cases. Such solutions, if they exist, give desirable outcomes for exchange programs, as incentives to quit the program can be avoided. Also, as membership of a matching in the strong or weak core depends only on the number of vertices it covers from each player, a nonempty core always contains a maximum size matching (this follows from the fact that any set of vertices coverable by a matching is also coverable by a maximum size matching). Therefore, our concepts are compatible with the most important restriction used in KEPs, namely that the number of transplants should be maximized. Furthermore, as a simulation study of Colette~\cite{collette2023weak} -- using the state-of-the-art KEP generator by Saidman et al.~\cite{saidman2006increasing} -- recently showed, the weak core of realistic international kidney exchange markets always admits a weak core exchange. Hence, even though existence is not guaranteed in theory, instances with an empty weak core arise only in special constructions, but not in real-world markets. Furthermore, this also highlights that even though most arising problems are NP-hard in theory, it is feasible to compute core solutions in practice with the help of integer programming techniques. 

In addition to the applicability of our results to kidney exchange problems, our algorithm for the problem with couples offers a new class of non-bipartite matching games where core solutions can be efficiently found even in the presence of couples having joint preferences.


\paragraph{Structure of the paper}
The rest of this section describes related work and introduces the necessary definitions and notation. In subsequent sections, we study the computational complexity of membership and non-emptiness of the weak and strong cores, under various assumptions on the players.

In Section \ref{sec:size2}, we show how membership in the weak and strong cores in the NTU matching game with couples can be characterized using alternating paths. We then prove that the weak core is always non-empty in this case; however, optimizing on the weak core is NP-hard. The strong core can be empty, but we are able to give a structural characterization and a polynomial-time algorithm for finding a minimum cost strong core solution if it exists. 

In Section \ref{sec:constno}, we show that all problems studied are polynomial-time solvable if the number of players is constant, even if the sizes of the partition classes can be arbitrarily large. 
Section \ref{sec:hard} presents hardness results for the case where the partition classes have bounded size. We show that all problems are hard for some small constant bound, and for most problems we can prove coNP-hardness even for partition classes of size at most 3.

\subsection{Related work}

\paragraph{Transferable Utility (TU) Matching Games.}
Matching games with transferable utilities are known as assignment games. A foundational result by Shapley and Shubik~\cite{shapley1971assignment} shows that the core of any assignment game is non-empty. This result was extended to many-to-many settings by Sotomayor~\cite{sotomayor1992multiple}, who proved core non-emptiness for any capacity function $b$.

Computational aspects of core-related problems have also been studied. Biró et al.~\cite{biro2018stable} showed that deciding whether an allocation lies in the core is coNP-complete even for uniform 3-assignment games. For $b\le 2$, however, they proved that both core verification and finding a core solution are solvable in polynomial time.

The partitioned matching game was introduced by Kern et al. \cite{kern2019generalized} to model international kidney exchange programs (IKEPs), where players control subsets of vertices rather than individual ones. Later works~\cite{benedek2021computing,benedek2025partitioned} examined the computational complexity of these games and showed that the core can be empty. They also established NP-hardness and coNP-hardness for core non-emptiness and verification, even when players control at most three vertices. A close relationship was established with many-to-many assignment games via two-way reductions.

A variant of these games involving unbounded exchanges, called partitioned permutation games, was studied by Benedek et al.~\cite{csajiunbounded}. They demonstrated that such games always admit a core solution, which can be found in polynomial time. Nevertheless, verifying whether a given solution belongs to the core remains coNP-complete.

\paragraph{Non-Transferable Utility (NTU) Matching Games.}
In the NTU setting, Scarf’s classical result~\cite{scarf1967core} establishes the non-emptiness of the (weak) core in a broad class of games, called balanced games. Kaneko~\cite{kaneko1982central} applied Scarf's Lemma to the NTU version of the Shapley-Shubik assignment game and showed that the weak core is always non-empty, even in general settings. Biró and Fleiner~\cite{biro2016NTU} further used Scarf’s Lemma to establish the existence of fractional weak core solutions in hypergraph matching games.

Stability is a concept closely related to the core. In many-to-one matching markets, Roth~\cite{roth1984stability} showed that under strict preferences, the strong core coincides with stable matchings. Gale and Shapley~\cite{gale1962college} proved the existence of stable matchings in bipartite graphs, while Irving~\cite{irving1985efficient} gave a polynomial algorithm for non-bipartite settings with strict preferences, where existence is no longer guaranteed.

However, in many-to-many markets, the connection between core and stability weakens. Blair~\cite{blair1988lattice} and Sotomayor~\cite{sotomayor1999three} demonstrated that the strong core may be empty and does not coincide with stable outcomes. Recently, Biró and Csáji~\cite{biro2024strong} showed that deciding strong core non-emptiness is NP-hard even in restricted many-to-many markets, and that verifying core membership is coNP-hard.

The NTU version of the partitioned matching game was first explored in our earlier technical report~\cite{egres-19-12}, which established basic complexity results. Independently, Collette's Master's thesis~\cite{collette2023weak} also studied the weak core in NTU partitioned matching games. He proved that the weak core is always non-empty for two players, and provided a no-instance for the core with 3 players - with 9 vertices each, compared to our no-instance with 7 vertices per player. Then, he proposed a (not polynomial) cutting-plane-based algorithm to compute weak core outcomes. Using the Saidman generator~\cite{saidman2006increasing}, he showed empirically that weak core matchings always existed in simulations, highlighting the practical viability of this concept. He also compared weak core outcomes with those of rejection-proof mechanisms proposed by Blom et al.~\cite{blom2022rejection}.

\paragraph{Incentives in kidney exchange programs.}

\new{The literature on incentive issues in kidney exchange programs can be broadly classified into three main streams, depending on whether they adopt a cooperative, non-cooperative, or mechanism-design perspective. We summarize these below and highlight their relationship to our approach.}

\smallskip
\new{\emph{Cooperative approaches.}  
In the European setting, a common approach is to treat countries as cooperating players and to focus on fairness over time across multiple exchange rounds \cite{hajaj2015strategy,kern2019generalized,klimentova2021fairness}. In these systems, each country is assigned a ``target'' number of transplants in each round based on a fairness criterion. Because the actual matches in a round may deviate from these targets due to medical and logistical constraints, discrepancies are tracked as credits (positive or negative) and carried forward to future rounds. Over time, this mechanism balances outcomes across rounds.  
Kern et al.~\cite{kern2019generalized} proposed using solution concepts from transferable utility (TU) cooperative game theory to determine these target allocations, such as the benefit value, contribution value, Shapley value, or Banzhaf value. A key advantage of this approach is that it is \emph{compatible with maximum-size matching}: in each round, the actual matching is chosen to maximize the total number of transplants, and fairness is addressed through the credit adjustments across rounds.}

\smallskip
\new{
\emph{Non-cooperative approaches.}  
In the US setting, hospitals are typically modeled as self-interested players in a non-cooperative game. Much of the literature has focused on strategy-proofness and individual rationality (IR), where IR in NTU terms corresponds to the absence of a single blocking player \cite{ashlagi2011individual,ashlagi2012new,ashlagi2014free}.  
A major challenge in this setting is that strong incentive guarantees come at a cost to efficiency. Ashlagi and Roth~\cite{ashlagi2014free} proved that no mechanism can be both strategy-proof and individually rational while guaranteeing more than half of the maximum possible number of transplants, even when only 2-way cycles are allowed. Ashlagi et al.~\cite{ashlagi2015mix} proposed a randomized mechanism that achieves at least half the maximum exchange size and proved that no mechanism can exceed 7/8 of the optimal size under these constraints.  
Carvalho et al.~\cite{carvalho2017nash} studied the two-hospital case and characterized Nash equilibria, while Carvalho and Lodi~\cite{carvalho2023Nash} extended these results to social welfare equilibria for an arbitrary number of players, also assuming a bound of 2 on the exchange length, as in our model. These equilibria are compatible with maximum-size solutions, but the authors do not address coalition deviations, which are central to our setting.  
Gourvès et al.~\cite{gourves2008cooperation} showed that deciding individual rationality in weighted bipartite matchings is NP-hard, illustrating the computational complexity of incentive-related questions in this domain.}

\smallskip
\new{
\emph{Mechanism design for clearinghouses.}  
Roth et al.~\cite{roth2005pairwise} introduced a priority-based mechanism for pairwise exchanges that is strategy-proof and ensures that revealing all available pairs is a dominant strategy. Sönmez and Ünver~\cite{sonmez2013market} generalized this approach, showing that there exist mechanisms that are simultaneously strategy-proof and Pareto-optimal \emph{only for pairwise exchanges}. However, they also proved that no such mechanism exists once longer cycles or chains are allowed, which fundamentally limits this line of work.  
Blom et al.~\cite{blom2022rejection} proposed a rejection-proof mechanism, meaning that it is a weakly dominant strategy for participants to accept its outcome given that others also accept.}

\smallskip
\new{
\emph{Our contribution and relation to the literature.}  
Our work combines ideas from cooperative and non-cooperative perspectives. We model IKEPs as cooperative games with non-transferable utility, where players (countries or hospitals) seek to maximize the number of their patients who receive transplants. We focus on two stability notions: the strong core and the weak core.  
Unlike many non-cooperative mechanisms, these notions do not require a loss in efficiency: we show that whenever the strong or weak core is nonempty, there always exists a core solution that achieves the maximum possible number of transplants. Thus, our framework provides a path to designing systems that incentivize participation while fully preserving efficiency. This distinguishes our approach from much of the prior literature, where incentive guarantees often come at the expense of the total number of transplants.}

\new{We note that kidney exchange programs can also include altruistic (nondirected) donors -- i.e. donors without a related patient in need of a kidney -- who can start a chain of transplantations \cite{roth2006utilizing,ashlagi2012new}. 
While our model does not deal with altruistic donors, this is in line with most game-theoretical literature on kidney exchange, both in Europe~\cite{benedek2023computing,benedek2025partitioned,benedek2021computing,kern2019generalized} and in the US \cite{ashlagi2015mix,ashlagi2014free,roth2005transplant}.}

\paragraph{Empirical studies.}
Biró et al.~\cite{biro2019building} provided a comprehensive overview of European kidney exchange programs. Many countries, such as Norway, Denmark, Sweden, and France, only allow pairwise exchanges, while others like the UK permit up to 3-way exchanges. Druzsin et al.~\cite{druzsin2024performance} conducted simulation studies demonstrating that allowing full cooperation -- rather than restricting international exchanges to hard-to-match patients -- could lead to significantly more transplants.

Mechanisms based on multiround credit systems have also been studied~\cite{hajaj2015strategy,kern2019generalized,klimentova2021fairness}. In these approaches, cooperative game theory concepts are used to determine target allocations, and matchings in each round are selected to approximate these targets. Benedek et al.~\cite{benedek2021computing} conducted large-scale simulations, while~\cite{benedek2025partitioned} analyzed the theoretical complexity of these problems.

\subsection{Preliminaries}
For a positive integer $k$, we use the notation $[k]=\{1,2,\dots,k\}$. An instance $I$ of the \emph{NTU Partitioned Matching Game} is defined by a tuple $(G,\partition)$, where $G = (V;E)$ is a graph and $\partition=(V_1, V_2, \dots, V_m)$ is a partition of $V$ into $m$ disjoint subsets, where $m$ is the number of players. Each class of this partition corresponds to a player; we say that the vertices in $V_i$ belong to player $i\in [m]$ (sometimes we identify the player and the set and say that player $i$ has size $|V_i|$). If each $V_i$ has size 2, then the problem is called the \emph{NTU matching game with couples}.

Given a matching $M$ in $G$, the utility of $M$ for player $i$ is defined by $u_i(M)=|V(M)\cap V_i|$, 
where $V(M)$ denotes the vertex set of $M$; i.e., the utility is the number of vertices of $V_i$ covered by $M$.

A coalition $\mathcal{P}=\{ i_1, i_2,\dots,i_k \}$ of players is \emph{strongly
blocking} for the matching $M$ if there exists a matching $M'$ in the
induced subgraph $G[V_{i_1} \cup V_{i_2}\cup \dots \cup V_{i_k}]$ such that $u_{i_j}(M')>u_{i_j}(M)$ for every $j \in [k]$. Similarly, a coalition $\mathcal{P}=\{ i_1, i_2,\dots, i_k\}$ of players is \emph{weakly blocking} for a matching $M$ if
there exists a matching $M'$ in the
induced subgraph $G[V_{i_1} \cup V_{i_2}\cup \dots \cup V_{i_k}]$ such that $u_{i_j}(M') \geq u_{i_j}(M)$ for every $j \in [k]$, and $u_{i_j}(M') > u_{i_j}(M)$ for at least one $j \in [k]$.


In the NTU setting, a matching $M$ is in the \emph{weak core} if there is no strongly blocking coalition for it, and it is in the \emph{strong core} if there is no weakly blocking coalition. It is easy to see that if $M$ is in the strong core, then it is also in the weak core. Deciding whether a given matching $M$ is in the strong or weak core is a problem in coNP, since a blocking coalition, together with the matching $M'$ that certifies the blocking, is a suitable witness.

Given an instance $I=(G,\partition)$ of the NTU partitioned matching game with partition classes of size at most $\ell$, we denote the problems of deciding the existence of a weak core or strong core matching by \weakex-$\ell$ and \strongex-$\ell$, respectively. Similarly, we denote the problem of deciding if a given matching $M$ in $I$ is in the weak or strong core by \weakverif-$\ell$ and \strongverif-$\ell$, respectively. The $\ell = 2$ case corresponds to the NTU matching game with couples.

The following lemma is a central tool for finding matchings in the weak or strong cores, and for verifying if a given matching is in the weak or strong core. 
\begin{lemma}\label{lem:lowerbound}
Given a graph $G = (V;E)$, a partition $U_1 \cup U_2 \cup \dots \cup U_l$ of $V$, and a vector $q \in \Zset_+^m$ \new{with $q_i\le |U_i|$}, it can be decided in $\mathcal{O}(|V|^{2.5})$ time if there is a matching $M$ such that $|U_i\cap V(M)| \geq q_i$ for every $i \in [l]$. Furthermore, given a cost function $c:E\to \mathbb{R}$, we can find a minimum cost such matching in $\mathcal{O}(|V|^4)$ time, if one exists.
\end{lemma}

\begin{proof}
Create a graph $\hat{G}=(\hat{V},\hat{E})$ as follows. For each $U_i$ ($i\in [l])$, we add $k_i=|U_i|-q_i$ dummy vertices $d_i^1,\dots, d_i^{k_i}$. All of these vertices are connected to the vertices in $U_i$ and each other. Furthermore, if the number of vertices so far is odd, then we add a further dummy vertex $d^*$, connected to all dummy vertices. Then, it is easy to see that there exists a matching $M$ in $G$ such that $|V(M)\cap U_i|\ge q_i$ $\forall i\in [l]$, if and only if there exists a perfect matching $\hat{M}$ in $\hat{G}$. As the creation of $\hat{G}$ can be done in $\mathcal{O}(|V|^2)$ time, and a perfect matching in $\hat{G}$ can be found in $\mathcal{O}(|V|^{0.5}|\hat{E}|)$-time, we get the $\mathcal{O}(|V|^{2.5})$ runtime as $|\hat{E}|\le (2|V|)^2$.

To find a minimum cost matching $M$ with $|V(M) \cap U_i| \geq q_i$ $\forall i\in [l]$, if one exists, given a cost function $c:E\to \mathbb{R}$, let $R:=\max \{ |c(e)|\mid e\in E\} + 1$. It suffices to find a minimum cost perfect matching in $\hat{G}$ with the cost function $c'(e):=c(e)$, if $e\in E$ and $c'(e)=0$ for $e\in \hat{E}\setminus E$. 

A minimum cost matching can be computed with Edmonds' Blossom algorithm in $\mathcal{O}(|V|^2|\hat{E}|)$ time, so the claimed runtime also follows.
\end{proof}

\section{The NTU matching game with couples}\label{sec:size2}

In this section, we study the case where every partition class has size at most 2. 
We show that in this case, deciding membership and deciding existence are efficiently solvable for both the weak and strong cores, although deciding the emptiness of the strong core requires a complicated approach relying on many structural results. Somewhat surprisingly, the computational
difficulty of optimization is the other way around: it is NP-hard for the weak core, while polynomial-time solvable for the strong core.

\paragraph{The set $M_0$ of player edges.} First of all, we may assume without loss of generality that every partition class has size exactly two; otherwise we add an isolated vertex and make it belong to that player. 
With this assumption, the partition classes can be considered as graph edges, which we call \emph{player edges}. 
The perfect matching on vertex set $V$ defined by the player edges is denoted by $M_0$. For a player $A$, we will denote the corresponding player edge by $\pedge_A$. These edges are not part of the original graph $G=(V,E)$, although $E$ can also contain an edge $e=aa'$ between the two vertices of a player $A=\{ a,a'\}$, in which case $\pedge_A$ will be a parallel edge to $e$. See Figure~\ref{fig:pedges} for an illustration.

\begin{figure}
    \centering
    \includegraphics[width=0.4\linewidth]{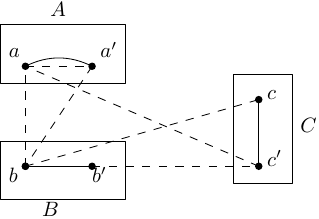}
    \caption{An example of the graph supplemented with the player edges. The dashed edges denote $E$ and the solid edges denote the player edges $M_0$.}
    \label{fig:pedges}
\end{figure}

We say that a path or cycle in $E \cup M_0$ is \emph{$M_0$-alternating} if it alternates between edges in $M_0$ and edges in $E$. Note that if both $E$ and $M_0$ contain edges between vertices $u$ and $v$, then these two parallel edges form an $M_0$-alternating cycle. 

For an $M_0$-alternating path or cycle $P$, we can consider the coalition formed by $\mathcal{P}=\{ i\in [m]\mid \pedge_i\in P\}$, and the matching $P\cap E$ on the vertices of the coalition. Let $M\subseteq E$ be an arbitrary matching; we say that $P$ is a (weakly/strongly) \emph{blocking path or cycle for the matching $M$}  if $\mathcal{P}$ is a weakly/strongly blocking coalition for $M$, as witnessed by the matching $P\cap E$. 

\subsection{Weak core}

As we shall see, the problems concerning the weak core are considerably simpler than those concerning the strong core, except for the optimization problem. 
We begin with a characterization result for the weak core. 

\begin{lemma}\label{2weak}
A matching $M$ is in the weak core if and only if
\begin{enumerate}[a)]
  \item there is no $M_0$-alternating cycle such that every player edge in it has at most one vertex covered by $M$,
  \item there is no $M_0$-alternating path such that the player edges at the two ends of the path have 0 vertices covered by $M$, and the player edges in the middle have one vertex covered by $M$.
\end{enumerate}
\end{lemma}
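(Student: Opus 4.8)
The statement characterizes weak core membership via two forbidden structures. The key observation is that a strongly blocking coalition corresponds to a matching $M'$ on some set of players in which every player strictly increases its utility. Let me think about what "strictly increases" means per player.

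Each player has size 2, so its utility is in $\{0,1,2\}$. If $M'$ strictly improves player $A$, then $u_A(M') \in \{1,2\}$ and $u_A(M) < u_A(M')$. So either $u_A(M)=0$ and $u_A(M')\geq 1$, or $u_A(M)=1$ and $u_A(M')=2$. A player with $u_A(M)=2$ can never be in a strongly blocking coalition.

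Now consider the subgraph $G[V_C]$ where $V_C$ is the vertex set of the coalition $C$, together with the matching $M'$. Since $M'$ is a matching and each player is an edge (in $M_0$), the union $M' \cup M_0|_C$ is a union of paths and cycles alternating between $E$-edges and $M_0$-edges.

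I will split into components: $M' \cup (M_0 \text{ restricted to } C)$ decomposes into $M_0$-alternating paths and cycles. The crucial claim is that if $C$ is strongly blocking, then at least one component of this decomposition is itself a blocking path or cycle of the type described in (a) or (b). This is the main obstacle, and it requires showing that improvement can be "localized" to a single component.

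The idea for localization: a player $A$ with $u_A(M)=2$ cannot appear in any component of a blocking structure (it would not improve), so restrict attention to players with $u_A(M)\le 1$. For a player $A$ with $u_A(M)=1$ to improve, $M'$ must cover both vertices of $A$; since both vertices of $A$ are matched by $M'$, player $A$ lies in the interior of its path or cycle component (both incident $M'$-edges exist), i.e.\ it is a "middle" player with exactly one $M$-covered vertex. For a player $A$ with $u_A(M)=0$, it suffices that $M'$ covers at least one vertex of $A$; such a player may be at the end of a path. So each component $P$ of the decomposition, after discarding players with $u_A(M)=2$ (which don't occur) and noting the structure, is either (i) a cycle in which every player has $u_A(M)\le 1$, or (ii) a path whose two endpoint-players have $u_A(M)=0$ and whose interior players have $u_A(M)=1$ — but we must also handle the case where an interior player has $u_A(M)=0$.

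For case (ii) with an interior player $B$ having $u_A(M)=0$: then $B$ is already improved by $M'$, and we can cut the path at $B$, splitting it into two shorter $M_0$-alternating paths, each still blocking (each endpoint player now has $u_A(M)=0$). Iterating this cutting terminates, yielding paths of exactly type (b) or a cycle of type (a). Conversely, if such a cycle or path exists, it directly gives a strongly blocking coalition (every player strictly improves: the $0$-players go to $\ge 1$, the middle $1$-players go to $2$), proving the "only if" direction's contrapositive. The converse direction (existence of forbidden structure $\Rightarrow$ not in weak core) is straightforward: just take the coalition $P \cap M_0$ with witness matching $P \cap E$ and verify utilities increase. The forward direction is the substantive part, and the cutting argument for interior $0$-players is where I expect to spend the most care, together with checking that an isolated player edge $uv$ that also lies in $E$ is correctly treated as the degenerate alternating cycle mentioned in the preliminaries.
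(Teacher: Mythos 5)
Your proposal is correct and follows essentially the same route as the paper: decompose the symmetric difference of the blocking matching $M'$ with the coalition's player edges into $M_0$-alternating paths and cycles, and read off a forbidden cycle of type a) or path of type b). You are in fact slightly more careful than the paper's own proof, which passes directly from ``the end players have $0$ vertices covered by $M$'' to ``this contradicts b)'' without noting that an interior player of the path component might also have $0$ vertices covered by $M$ (so the component itself need not literally match the pattern in b)); your cutting step at such interior players closes exactly that small gap.
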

\begin{proof}
If $M$ is in the weak core, then conditions a) and b) hold, since otherwise the $M_0$-alternating path or cycle would block $M$. Suppose a) and b) hold, but $M$ is not in the weak core. There is a coalition of players with a matching $M'$ such that every player in the coalition is strictly better off with $M'$, therefore the players who have both of their vertices covered by $M$ cannot belong to the coalition -- we can delete the vertices of these players from the graph, and consider only the remaining graph. Take the union of the remaining player edges and $M'$. This consists of $M_0$-alternating cycles and paths. If there is an $M_0$-alternating cycle in the union, then it contradicts a). If there is an $M_0$-alternating path, then the player edges at the two ends of the path have one vertex covered by $M'$, therefore they have 0 vertices covered by $M$, which contradicts b).
\end{proof}
\begin{theorem} \label{thm:2weakverif}
 \weakverif-2 is solvable in $\mathcal{O}(|V|^{4.5})$ time. 

\end{theorem}
\begin{proof}
We show that we can check the conditions of Lemma \ref{2weak} in $\mathcal{O}(|V|^{4.5})$ time (utilizing that a maximum size matching can be computed in $\mathcal{O}(|V|^{2.5})$ time~\cite{blum1990new}). Let $M'_0 \subseteq M_0$ be the set of player edges that have at most one vertex in $V(M)$. Let $G'$ denote the graph obtained by restricting $G$ to $V(M'_0)$, and adding $M'_0$ to the edge set.

We can check if a given player edge $\pedge_A$ is in an $M'_0$-alternating cycle, since it is equivalent to checking if, after the deletion of this edge from $G'$, the remaining graph has a perfect matching. Indeed, the remaining player edges form a matching that covers all but two vertices of $G'$ (an \emph{almost perfect matching}), and there is an augmenting $M'_0$-alternating path between the two vertices of $A$ if and only if $\pedge_A$ is in an $M'_0$-alternating cycle in $G'$. This shows that condition a) can be checked in $\mathcal{O}(|V|^{3.5})$ time, by repeating the above for all player edges in $M'_0$.

We can also check whether there is an $M'_0$-alternating path between two given player edges $\pedge_A$ and $\pedge_B$ that have 0 vertices covered by $M$ and that do not belong to an $M'_0$-alternating cycle. To do this, we remove $\pedge_A$ and $\pedge_B$ from $G'$, and check if the remaining graph has an almost perfect matching. If such a matching exists, then there is an augmenting $M'_0$-alternating path between two of the four vertices of $A\cup B$, and this path cannot connect the two vertices of $A$ or the two vertices of $B$, because then $\pedge_A$ or $\pedge_B$ would be in an $M'_0$-alternating cycle in $G'$. Thus, condition b) is violated. Conversely, if there is no almost perfect matching after the deletion of $\pedge_A$ and $\pedge_B$, then there is no $M'_0$-alternating path where the two ends are $\pedge_A$ and $\pedge_B$. Therefore, condition b) can be checked by iterating through all possible choices of $A$ and $B$ in time $\mathcal{O}(|V|^{4.5})$.
\end{proof}

The next natural problem would be to decide whether the weak core is empty. However, it turns out that the weak core of the NTU matching game with couples cannot be empty.

\begin{theorem}
The weak core is never empty, hence \weakex-2 can be solved in $\mathcal{O}(1)$ time. Furthermore, a matching in the weak core can be found in $\mathcal{O}(|V|^{3.5})$ time.
\end{theorem}
\begin{proof}
We can construct a matching $M^*$ in the weak core the following way. Let $M_0$ be the perfect matching formed by the player edges as before. Iterating through the player edges, we check if there is an $M_0$-alternating cycle in the current graph containing that edge. If there is, then let the edges of $E$ in the cycle belong to $M$, and delete the vertices of the cycle from the graph. 
Repeat this with the remaining graph, until there are no $M_0$-alternating cycles. Note that each player edge will be checked at most once, as deleting edges cannot create an $M_0$-alternating cycle. Hence, this can be done in $\mathcal{O}(|V|^{3.5})$ time with maximum matching algorithms.

Let $M$ denote the resulting matching, and let $M^*$ be a maximum size matching in $G$ that covers every vertex of $V(M)$. This can be computed in $\mathcal{O}(|V|^{2.5})$ time. We claim that $M^*$ is in the weak core, since the conditions of Lemma \ref{2weak} are met. Condition a) clearly holds, since the construction deleted players that were covered twice by $M$ (and hence by $M^*$), and the remaining graph did not contain an $M_0$-alternating cycle.

Suppose for contradiction that condition b) does not hold, i.e.\ there is an $M_0$-alternating path $P$ such that the player edges at the two ends of the path have 0 vertices covered by $M^*$, and the player edges in the middle have one vertex covered by $M^*$. Let $N^*$ be the set of edges of $M^*$ that contain a vertex from the path $P$. We have $|N^*| \leq |P \cap M_0|-2=|P\cap E|-1$. Therefore, $(M^*\setminus N^*) \cup (P \cap E)$ is a larger matching than $M^*$, contradicting the choice of $M^*$ as a maximum-size matching.
\end{proof}

Since the weak core is always non-empty, it is natural to consider the problem of finding a minimum-cost matching in the weak core according to some linear cost function. However, this problem turns out to be hard even for players of size one, as the following theorem shows (formally, we can add an isolated vertex to each player in the theorem and the proof below if we want to get an instance with couples).

\begin{theorem}\label{thm:1opthard}
The following problem is NP-complete even for players of size 1: given an instance $(G,\partition)$ of \weakex-2, a cost vector $c \in \mathbb{Z}^E_+$, and a positive integer $k$, find a matching with cost at most $k$ in the weak core. 
\end{theorem}
\begin{proof}
    Observe that if each player has size one, then a matching $M$ is in the weak core if and only if $V(M)$ is a vertex cover of the graph. This enables us to reduce {\sc Vertex Cover} to our problem as follows. Let $(G=(V,E), k)$ be an instance of {\sc Vertex Cover}, where we want to decide if there is a vertex cover of $G$ of size at most $k$. We construct a graph $G'$ by adding new vertices $v_1,\dots,v_k$, and adding edges from each $v_i$ to every vertex in $V$. The original edges have cost $k+1$, while the new edges have cost 1. 
    
    We claim that $G$ has a vertex cover of size at most $k$ if and only if the weak core of $G'$, with players of size 1, contains a matching of cost at most $k$. Indeed, a matching of $G'$ with cost at most $k$ cannot contain any original edges and at most $k$ new edges, and such a matching is in the weak core if and only if it contains exactly $k$ new edges and their endpoints in $V$ form a vertex cover of $G$.
\end{proof}

\subsection{Strong core}

 Our first result on the strong core is a characterization similar to Lemma \ref{2weak}.

\begin{lemma}\label{2strong}
 A matching $M$ is in the strong core if and only if
\begin{enumerate}[a)]
  \item every player edge that is in an $M_0$-alternating cycle has both its vertices covered by $M$,
  \item there is no $M_0$-alternating path from a player edge that has none of its vertices covered by $M$ to a player edge that has at most one vertex covered by $M$,
  \item there is no $M_0$-alternating path such that at least three player edges in the path have at most one vertex covered by $M$.
\end{enumerate}
\end{lemma}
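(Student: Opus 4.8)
The proof should mirror the structure of Lemma~\ref{2weak}: one direction shows that violating any of a)--c) produces a weakly blocking $M_0$-alternating path or cycle, and the other direction shows that an arbitrary weakly blocking coalition, after cleaning up and taking symmetric differences with $M_0$, must induce one of the forbidden configurations. The key observation throughout is that if a coalition blocks weakly, then after removing all players that are already covered twice by $M$ (they cannot be in a weakly blocking coalition, since their utility cannot strictly increase and removing them does not hurt the witness matching $M'$ restricted to the rest), the symmetric difference $M' \triangle M_0$ decomposes into $M_0$-alternating paths and cycles, and at least one of these components must carry the strict improvement.

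**Easy direction (a,b,c $\Rightarrow$ strong core).** I would argue the contrapositive. If a) fails, there is a player $X$ on an $M_0$-alternating cycle $C$ with at most one $M$-covered vertex; the coalition $C \cap M_0$ with witness $C \cap E$ gives every player in it both vertices, so nobody loses and $X$ strictly gains — a weakly blocking cycle. If b) fails, take the $M_0$-alternating path $P$ from a $0$-covered player to an $\le 1$-covered player; along $P$ the witness $P \cap E$ gives every internal player $2$ covered vertices and each endpoint player $1$ covered vertex, so the $0$-covered endpoint strictly improves and the other endpoint does not lose (it had $\le 1$, now has exactly $1$), while internal players do not lose — weakly blocking. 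If c) fails, take an $M_0$-alternating path with three player edges $X_1, X_2, X_3$ each $\le 1$-covered; the endpoints of the path are covered once by $P \cap E$ and every internal player is covered twice, so among $X_1, X_2, X_3$ at least one is internal (covered twice now, so strictly improves if it was $\le 1$) or — if all three are endpoints, impossible since a path has two ends — so at least one strictly improves and none loses. The slightly delicate point here is handling which of the $X_i$ are endpoints versus internal and checking the endpoints indeed don't lose utility; this is a short case check.

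**Harder direction (strong core $\Rightarrow$ a,b,c).** Assume $M$ is in the strong core. Suppose some configuration is violated and derive a weakly blocking coalition, contradicting strong-core membership. Violation of a) directly gives a blocking cycle as above. For b) and c), I would instead argue structurally: take \emph{any} weakly blocking coalition $S$ with witness $M'$, delete the twice-covered players, form $M' \triangle M_0$ on the rest, and isolate the component $Q$ (path or cycle) that contains a strictly-improving player. If $Q$ is a cycle, every player on it ends up twice-covered under $M' \cap E = Q \cap E$, and since one strictly improves it was $\le 1$-covered, so a) is violated. If $Q$ is a path, the two endpoint players are once-covered by $Q \cap E$ and all internal ones twice-covered; the strictly-improving player is either an internal player that was $\le 1$-covered, or an endpoint that was $0$-covered. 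In the first case, that internal player together with either endpoint (each of which is $\le 1$-covered under $M$, since they are once-covered by $M'$ and $M'$ weakly dominates $M$ on the coalition) gives, via the sub-path between them, a configuration with up to three $\le 1$-covered player edges along an $M_0$-alternating path — that is exactly the situation forbidden by b) or c) depending on how many are $0$-covered. In the second case the endpoint is $0$-covered and the other endpoint is $\le 1$-covered, so we get a path forbidden by b). The main obstacle is the bookkeeping: showing that the relevant sub-path of $Q$ is itself a valid $M_0$-alternating path with the claimed coverage pattern, and correctly sorting the cases according to how many of the distinguished player edges are $0$- versus $1$-covered so that each lands in exactly one of b), c). I expect this case analysis, rather than any single clever idea, to be where the real work lies; everything else is the same symmetric-difference argument used for Lemma~\ref{2weak}.
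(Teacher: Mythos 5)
Your overall strategy --- exhibit a blocking path or cycle when a condition fails, and analyse the components of $M'\triangle M_0$ for the converse --- is the same as the paper's (note only that your two direction labels are swapped relative to what each block actually proves), but two steps as written do not work. First, your claim that twice-covered players can be deleted from a weakly blocking coalition because ``removing them does not hurt the witness matching $M'$ restricted to the rest'' is false: unlike in the weak-core argument of Lemma~\ref{2weak}, a player with both vertices covered by $M$ can be an essential member of a \emph{weakly} blocking coalition (it only needs to stay at $2$ under $M'$, and its $M'$-edges may be exactly what covers the strictly improving player). Deleting such players can destroy the witness, so the restricted $M'$ may have no strictly improving player and your dichotomy (``internal and $\le 1$-covered, or an endpoint and $0$-covered'') can fail. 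The fix is simply not to delete anything: the paper takes $M'\triangle M_0$ over the whole coalition, and twice-covered players then sit harmlessly as cycle or internal path players (who receive $2$ from $M'$), while path endpoints receive $1$ from $M'$ and hence had at most $1$ under $M$ by weak domination.

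Second, in the case of a path component $Q$ whose strictly improving player is internal, the ``sub-path between them'' to ``either endpoint'' contains only two players with at most one $M$-covered vertex, and a path carrying just two once-covered players violates neither b) nor c). You must keep the whole path $Q$: its two (distinct) endpoint players plus the internal strictly improving player are three player edges each covered at most once by $M$, which violates c), or b) if an endpoint is $0$-covered. A related point arises in your forward argument for c): if the actual endpoints of the given path are twice-covered by $M$, the path itself is not weakly blocking (those endpoints would drop from $2$ to $1$), so one must first truncate to the subpath between the first and last $\le 1$-covered player edges; your ``none loses'' assertion is not automatic without this. None of this is hard to repair, but as written each of these is a genuine gap.
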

\begin{proof}
It is easy to see that if $M$ is in the strong core, then these conditions hold, since otherwise the $M_0$-alternating path or cycle would block $M$. Suppose that the conditions hold, but the matching $M$ is not in the strong core. Then there is a weakly blocking coalition of players, together with a matching $M'$, such that there is one player that is better off with $M'$ than $M$, and the others are not worse off. Take the union of $M'$ and $M_0$. This is a disjoint union of $M_0$-alternating cycles and paths. If the player that is better off by $M'$ has 0 vertices covered by $M$, then it has at least one vertex covered by $M'$, so it is either in an $M_0$-alternating cycle, which contradicts condition a), or it is in an $M_0$-alternating path. In the latter case, the two player edges at the ends of the path are players who have one vertex covered by $M'$, so they have at most one vertex covered by $M$, which contradicts condition b). If the player that is better off by $M'$ has one vertex covered by $M$, then it has two vertices covered by $M'$, so it is either in an $M_0$-alternating cycle, which contradicts condition a), or it is in an $M_0$-alternating path (and it is not at the end of the path), but the two player edges at the ends of the path are players that have one vertex covered by $M'$, so they have at most one vertex covered by $M$, which contradicts condition b) or c).
\end{proof}

\begin{theorem}\label{2score-verif}
\strongverif-2 can be solved in $\mathcal{O}(|V|^{5.5})$ time. 
\end{theorem}

\begin{proof}
We need to show that we can check the conditions given by Lemma \ref{2strong} in polynomial time. The techniques used here are similar to the proof of Theorem \ref{thm:2weakverif}, so we skip some details. We can check if a given player edge is in an $M_0$-alternating cycle, since it is equivalent to checking if, after the deletion of this edge, the remaining graph (the original graph $G$ together with the remaining player edges) has a perfect matching. We can do this for every player edge with at most one vertex in $V(M)$. 

We can check whether there is an $M_0$-alternating path between two given player edges, knowing that these do not belong to an $M_0$-alternating cycle, since this is equivalent to checking if by deleting these two edges the remaining graph has an almost perfect matching. We can do this for any two player edges such that one of them has 0 vertices covered in $V(M)$, and the other one has at most one vertex in $V(M)$.

For three given player edges such that none of them belong to an $M_0$-alternating cycle, we can check if there is an alternating path that contains all three of them: this is equivalent to checking if, by deleting all three of these edges, the remaining graph has an almost perfect matching. We can check this for any three player edges that have one vertex in $V(M)$. This leads to a runtime of $\mathcal{O}(|V|^{5.5})$.
\end{proof}

\subsubsection{Deciding the existence of a strong core matching} \label{sec:scexistence}

Our aim in this subsection is to prove that it can be decided in polynomial time if the strong core is nonempty, and if so, a matching in the strong core can be found in polynomial time. These results require a careful analysis of the structure of $M_0$-alternating paths, so the proof will feature several structural lemmas. We start by introducing some notation. 

Let $\SC $ denote the set of matchings in the strong core.
We introduce some additional notation and definitions related to the structure of $M_0$-alternating paths.

\paragraph{$M_0$-alternating paths.} For three players $A$, $B$ and $C$, an \emph{$M_0$-alternating $A-B-C$ path} is an $M_0$-alternating path that starts with the player edge $\pedge_A$, ends with the player edge $\pedge_C$, and also contains the player edge $\pedge_B$.
    
    \paragraph{$M_0$-alternating $\delta$-paths.} An \emph{$M_0$-alternating $\delta$-path} (or \emph{$\delta$-path} for short) $H$ is an odd cycle and an $M_0$-alternating path with a common vertex $v$, where the $M_0$-edge adjacent to $v$ is on the path, and the cycle is $M_0$-alternating apart from $v$. The cycle of $H$ is denoted by $\Hcycle (H)$ and the path of $H$ is denoted by $\Hpath (H)$.
    An $(A,B;C)$-$\delta$-path for some players $A,B,C$ is an $M_0$-alternating $\delta$-path where $\pedge_A,\pedge_B$ are in $\Hcycle (H)$ and $\pedge_C$ is the end of the path $\Hpath(H)$. The vertex of $C$ at the end of the path is called the \emph{tip} and the unique common vertex of $\Hpath (H)$ and $\Hpath (H)$ is called the \emph{connector}. Note that the roles of $A$ and $B$ are exchangeable in the definition. See Figure \ref{fig:deltapath} for an illustration. 
\begin{figure}[h]
    \centering
    \includegraphics[width=0.6\linewidth]{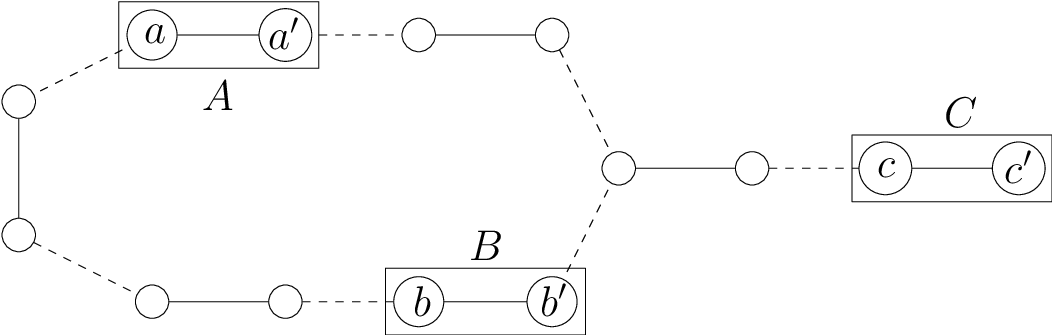}
    \caption{An $(A,B;C)$-$\delta$-path. Solid edges denote the player edges, i.e. edges of $M_0$, while dashed edges are in $E$. }
    \label{fig:deltapath}
\end{figure}

\paragraph{The sets of players $\nocyc,\nocyc_0,\nocyc^*,\nopath,\nopath^*
$.}
We proceed to define some additional subsets and relations of players, that will be used extensively in the proofs.
\begin{itemize}
    \item Let $\nocyc$ denote the set of players for whom there is no \altcyc\ through the corresponding player edge.
    \item Let $\nocyc_0\subseteq \nocyc$ denote the set of players $A \in \nocyc$ from which there is no $M_0$-alternating path to any other player in $\nocyc$.
    \item We denote by $\nocyc^*$ the set of players $A$ for whom $\exists M \in \SC$: $|V(M)\cap A| \leq 1$. 
  
\end{itemize}

Of course, we cannot determine $\nocyc^*$ without deciding whether the strong core is empty, so we will not use $\nocyc^*$ in the algorithm, only in the proof of correctness. By Lemma \ref{2strong} point a), we have $\nocyc^*\subseteq \nocyc$.

\begin{itemize}
      \item We define a subset $\nopath \subseteq \nocyc$ the following way. A player $B\in \nocyc$ is in $\nopath$, if for any $M_0$-alternating $A-B-C$ path for any $A \in \nocyc$ and $C \in \nocyc$, there exists an $(A,B;C)$-$\delta$-path or a $(C,B;A)$-$\delta$-path.

\item For $A,B \in \nopath$, we say that $(A,B)$ is a \emph{pair} if there exists an $(A,B;C)$-$\delta$-path for some $C\in \nocyc$.

\item Define $\nopath^* \subseteq \nopath$ the following way: a player $A \in \nopath$ is in $\nopath^*$ if for every $B,C \in \nopath$ such that $(A,B)$ and $(A,C)$ are pairs, $(B,C)$ is also a pair.

\end{itemize}

\paragraph{The graph $G^*$.} We introduce an auxiliary graph $G^*$ whose vertex set is $\nopath^*$ and $(A,B)$ is an edge if and only if it is a pair. We will show that this graph can be constructed in polynomial time. By the definition of $\nopath^*$, every connected component of $G^*$ is a clique. 
 The members of $\nocyc_0$ are isolated vertices in $G^*$ (but there may be other isolated vertices). Let $\mathcal{Q}$ denote the set of connected components of $G^*$. For a connected component $Q \in \mathcal{Q}$, let $V_Q \subseteq V$ denote the set of vertices of all players in $Q$.

We will illustrate these notions in an example below. But first, let us state our main theorem of the section about an alternative characterization of the strong core. 

\begin{theorem}\label{thm:scmatching}
        A matching $M$ of $G$ is in the strong core if and only if it satisfies the following conditions: 
    \begin{itemize}
        \item $|V(M)\cap A|=2$ for every $A \notin \nopath^*$,
        \item $|V(M)\cap V_Q| \geq |V_Q|-1$ for every $Q \in \mathcal{Q}$, except when $Q=\{A\}$ and $A \in \nocyc_0$.
    \end{itemize}
\end{theorem}

\paragraph{An example.} We present two examples in Figure \ref{fig:example1} in order to illustrate these new notions. The solid edges are in $M_0$, while the dashed edges are in $E$. In both examples, $\nocyc=M_0$ because there is no $M_0$-alternating cycle, and $\nocyc_0=\emptyset$. Also, $B\notin \nopath$ in both examples, since there is an $A-B-X$ alternating path, but there is neither an $(A,B;X)$-$\delta$-path nor an $(X,B;A)$-$\delta$-path. It can be checked that the remaining players are in $\nopath$ in both examples.

\begin{figure}[h]
    \centering
    \includegraphics[width=\linewidth]{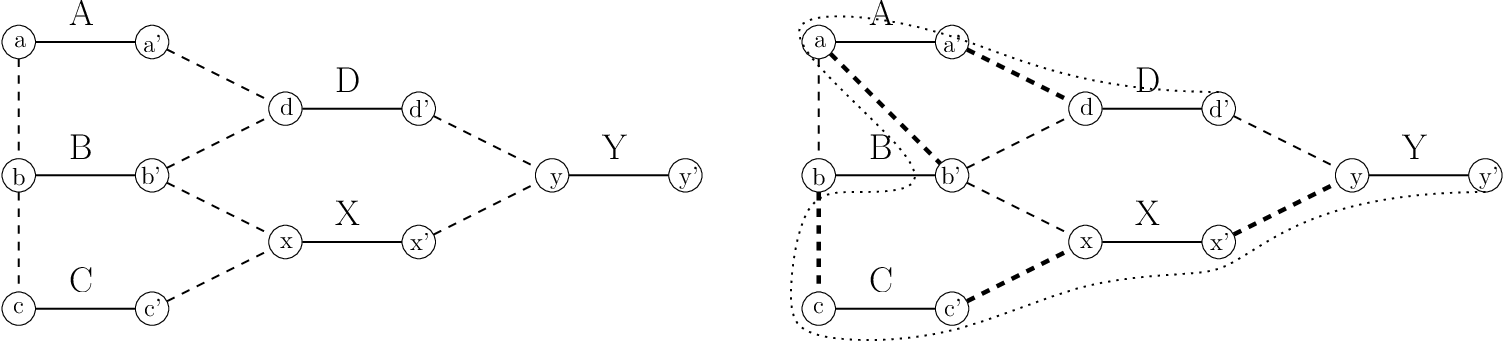}
    \caption{Solid edges are in $M_0$, dashed edges are in $E$. The example on the right is obtained from the example on the left by adding $ab'$. The strong core is empty on the left, while bold edges on the right form a matching in the strong core.}
    \label{fig:example1}
\end{figure}

  In the left example, the pairs are $(A,D),(A,X),(C,D),(C,X),(D,X)$, which means that $\nopath^*=\{A,C,Y\}$, and each are singletons in $G^*$. It can be checked that there is no matching that covers each of $A,C,X$ at least once and the remaining players twice, so $\SC=\emptyset$.

In the right example, $(A,C)$ is also a pair, so $\nopath^*=\{A,C,D,X,Y\}$, where $Y$ is a singleton in $G^*$, while $A,C,D,X$ form a clique. By Theorem \ref{thm:scmatching}, $M \in \SC$ if and only if it covers both vertices of $B$ and it leaves at most one vertex of $Y$ and at most one vertex of $A \cup C \cup D \cup X$ exposed. One such matching $M$ is indicated in bold; the dotted path is the single alternating path in $M \Delta M_0$.

\bigskip

We will see at the end of the section that Theorem~\ref{thm:scmatching} not only allows us to decide in polynomial time whether the strong core is empty, it also enables us to minimize a linear cost function on the strong core in polynomial time. Next, we proceed to prove the theorem.

\subsubsection*{Proving Theorem~\ref{thm:scmatching}}

We start with the key lemma, which states that in many cases, we can either ``unwind'' alternating paths or create a $\delta$-path. In order to be able to refer to the objects in the lemma in various contexts later in the proof, we use subscript or superscript $L$ to denote the objects in the conditions of the lemma. See Figure \ref{fig:unwind} for an illustration.

\begin{figure}[h]
    \centering
    \includegraphics[width=0.9\linewidth]{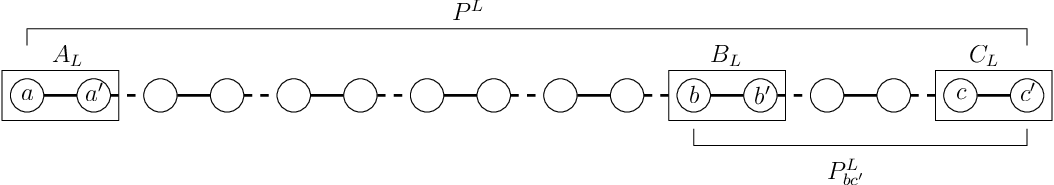}\vskip 1em
    \includegraphics[width=0.9\linewidth]{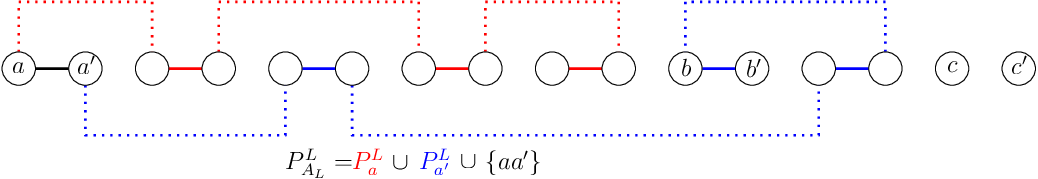}\vskip 1em
    \includegraphics[width=0.9\linewidth]{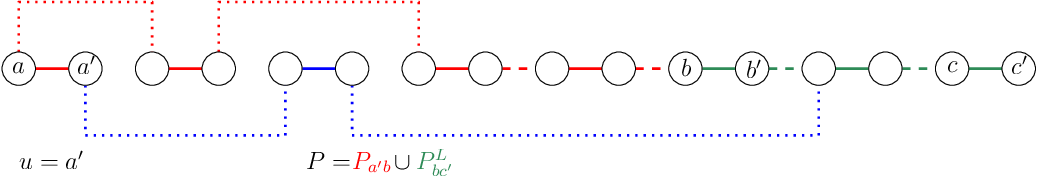}
    \caption{Illustration of the Unwinding Lemma. Top: the $M_0$-alternating $A_L-B_L-C_L$ path $P^L$, and its subpath $P^L_{bc'}$. Middle: The $M_0$-alternating path $P^L_{A_L}$ through the player edge $\pedge_{A_L}=aa'$, and its subpaths $P^L_a$ and $P^L_{a'}$. Bottom: in this example, the lemma gives an $a'-b$ path $P_{a'b}$ such that $P=P_{a'b} \cup P^L_{bc'}$ is an $M_0$-alternating path, and $P^L_{a'}\cup P$ contains an $(A_L,B_L;C_L)$-$\delta$-path $H$ with $P\subseteq H$.}
    \label{fig:unwind}
\end{figure}

\begin{lemma}[Unwinding Lemma]\label{lem:unwind}
    Let $A_L\in \nocyc$, and let $P^L_{A_L}$ be an $M_0$-alternating path through $\pedge_{A_L}$. Furthermore, let $P^L$ be an $A_L-C_L$ $M_0$-alternating path that ends in $c'\in C_L$ for some player $C_L$, and let $A_L\ne B_L=\{ b,b'\}$ be a player whose edge $bb'$ is on this path, where $b'$ is the vertex closer to $c'$. We denote by $P^L_{bc'}$ the part of $P^L$ from $b$ to $c'$.

     Under these conditions, \new{there exist} $u\in A_L$ and an $M_0$-alternating $u-b$ path $P_{ub} \subseteq P^L_{A_L} \cup P^L$ such that 
 $P=P_{ub}\cup P^L_{bc'}$ is an $M_0$-alternating path, and either    
    \begin{itemize}
    \item  the subpath $P^L_u$ of $P^L_{A_L}$ starting in $u$ is disjoint from $P$ (apart from $u$), i.e., $P^L_u\cup P$ is an $M_0$-alternating path, or 
  \item $P^L_u\cup P$ contains an $(A_L,B_L;C_L)$-$\delta$-path $H$ with $P\subseteq H$.
    \end{itemize}

\end{lemma}
\begin{proof}
    Suppose that the conditions of the lemma hold, but $u\in A_L$ and $P_{ub}\subseteq P^L_{A_L}\cup P^L$ cannot be chosen \new{such that the first case holds, i.e.,  $P^L_u\cup P$ is an $M_0$-alternating path. We show that in this case, there must be choices of $u\in A_L$ and $P_{ub}\subseteq P^L_{A_L}\cup P^L$ such that the second case holds.} 
    
    Choose $u\in A_L$ and $P_{ub}\subseteq P^L_{A_L}\cup P^L$ so that $P=P_{ub}\cup P^L_{bc'}$ is an $M_0$-alternating path and $P^L_{u}$ first intersects $P$ (after $u$) as close to $c'$ on $P$ as possible. Such a choice must exist because of the existence of $P^L$ and because of our assumption that $P$ cannot be chosen to be disjoint from $P_u^L$. 

    Let $v$ be this first intersection of $P_u^L\setminus \{u\}$ with $P$, and let $P^L_{uv}$ be the part of $P^L_u$ from $u$ to $v$. \new{ If $v$ is in the part $P^L_{bc'}$ of $P$ and $v$ is the vertex of its player closer to $b$ on $P$, then $P_{uv}^L\cup P$ is an $(A_L,B_L;C_L)$-$\delta$-path $H$ with $P\subseteq H$, and we are done. We show that no other case is possible.}

    \new{Suppose that $v$ is in the part $P^L_{bc'}$ of $P$, but $v$ is the vertex of a player that is closer to $c'$ on $P$. Then $P_{uv}^L\cup P_{uv}$ gives an $M_0$-alternating cycle through $A_L\in \nocyc$, where $P_{uv}$ is the part of $v$ between $u$ and $v$, contradiction.}

    \new{Suppose finally that $v$ is in the part of $P$ between $u$ and $b$.}
    Let $P_{vb}$ be the part of $P$ from $v$ to $b$. 
    Since $A_L\in \nocyc$, we have that $u':=\{ a,a'\} \setminus \{ u\}$ and $P_{u'b}:=\{ u'u\} \cup P^L_{uv}\cup P_{vb}\subseteq P^L_{A_L}\cup P^L$ are also valid choices in the lemma, since if $P_{u'b}$ was not an $M_0$-alternating path (i.e. if the two edges adjacent to $v$ in $P_{u'b}$ were both from $E$), then we would get an $M_0$-alternating cycle \new{$P^L_{uv}\cup P_{uv}$} through $A_L$, a contradiction. Furthermore, as $P^L_{uv}\subseteq P^L_{A_L}$ and $P^L_{u'}\subseteq P^L_{A_L}$ are disjoint, we must have that either $P^L_{u'}$ does not intersect $P'=P_{u'b}\cup P^L_{bc'}$, which contradicts our assumption that it is impossible to choose such a path $P_{u'b}\subseteq P^L_{A_L}\cup P^L$, or $P^L_{u'}$ first intersects $P'$ in the part $P_{vb}\cup P^L_{bc'}\setminus \{ v\}$. Since this part is the same in $P$ and $P'$, we get that this choice of $u'$ and $P_{u'b}$ leads to a case where the first intersection of $P^L_{u'}$ with $P'$ is closer to $c'$ on $P'$ than $v$ was to $c'$ on $P$, in contradiction to the choices of $u$ and $P_{ub}$.
\end{proof}

We remark that if $B_L=C_L$ in Lemma \ref{lem:unwind}, then the path $P$ given by the lemma always satisfies the first property, because there is no $(A_L,B_L;B_L)$-$\delta$-path.

\begin{lemma}\label{prop:nonzero}
    Let $A,B \in \nocyc$, and let $P$ be an $M_0$-alternating $A-B$ path. Then \new{for any} $M \in \SC$ we have $|V(M)\cap A|\ge 1$ and $|V(M)\cap B|\ge 1$.
\end{lemma}

\begin{proof}
    Suppose that $M\in \SC$ is a matching \new{where one of them, say $B$ has 0 vertices covered.  By Lemma~\ref{2strong} this means that $|V(M)\cap A|=2$. } Let $x_A$ and $y_A$ denote the two endpoints of the path $P_A$ containing $\pedge_A$ in $M \Delta M_0$. The existence of $P$ implies that there is an $M_0$-alternating path from $B$ to $P_A$, hence also from $B$ to $x_A$ or $y_A$. Such a path gives a blocking coalition, as $B$ improves and no player gets worse.
\end{proof}

\begin{lemma} \label{lem:cprime}
  Consider an $(A,B;C)$-$\delta$-path $H$ with $C\in \nocyc$, and an $M \in \SC$ such that $|V(M)\cap A|=1$ and $|V(M)\cap C|=2$. Then there is an $(A,B;C')$-$\delta$-path $H'$ with the same cycle as $H$, such that $|V(M)\cap C'|\le 1$ and $C'\in \nocyc^*$.
\end{lemma}

\begin{proof}
    Let $P_C=P_c\cup \{ cc'\} \cup P_{c'}$ be the $M\triangle M_0$ alternating path through $\pedge_C$. Let $P_{dc}=\Hpath (H)$ be the path of the $\delta$-path $H$ that ends in $c\in C$ and starts in $d\in D$ (where $d$ is the common vertex of the path and the cycle of $H$).

We use Lemma~\ref{lem:unwind} as follows. We set $A_L:=C$ and $B_L=C_L:=D$. The path $P^L$ is set to be $P_{dc}$ and $P^L_{A_L}$ is set to be the maximal subpath of $P_C$ that contains $\pedge_C$, but does not intersect $\Hcycle (H)$ (or $\Hcycle (H)\setminus \{ d\}$ if $C=D$). As $\pedge_C\notin \Hcycle (H)$, there is such a subpath. For $u\in \{c,c'\}$, let $P^L_u=P_u \cap P^L_{A_L}$.

Lemma~\ref{lem:unwind} gives us $u\in C$ and $P_{ud'}$ such that \new{the path $P_u^L\setminus \{ u\}$ does not intersect $P_{ud'}\cup \{ dd'\}$ nor $\Hcycle (H)$, by the choice of $P_{A_L}^L$ (if $C=D$ then $u\ne d$ and $u'=d$, so $P_u^L$ is disjoint from $\Hcycle (H)$). We claim that $P_u$ does not intersect $P_{ud'}\cup \{ dd'\}\cup  \Hcycle (H)$, which can be seen as follows. Suppose for the contrary that there is a nomempty intersection. As, $P^L_u$ is a subpath of $P_{u}$ (and one that also starts in $u$), which is disjoint from $\Hcycle (H)\setminus \{ d\}$, and $d\in \Hcycle (H)$, we get that if $P_u\setminus \{ u\}$ intersects $P_{ud'}\cup \{ dd'\} \cup \Hcycle (H)$, then by the choice of $P^L_{A_L}$ the first intersection (from $u$) must be on $\Hcycle (H)$. Let this intersection point be $v$ and $P_{uv}$ be the part of $P_u$ from $u$ to $v$. Then, let $Q_{vd}$ be the subpath of $\Hcycle (H)$ between $v$ and $d$ that starts with $vv'\in M_0$, which must exists if $v\ne d$; if $v=d$ we let $Q_{vd}=\emptyset$. Then, we get an $M_0$-alternating cycle $P_{uv}\cup Q_{vd}\cup \{ dd'\} \cup P_{ud'}$ ($P_{ud'}$ and $Q_{vd}\subseteq \Hcycle (H)$ are disjoint as $P_{ud'}\subseteq (P_{dc}\cup P^L_{A_L})\setminus \{ dd'\}$ by Lemma~\ref{lem:unwind}) through $C\in \nocyc$, a contradiction.}



   \new{ Hence, we obtain an $(A,B;C')$-$\delta$-path by taking the union $\Hcycle(H) \cup \{ dd'\} \cup P_{ud'} \cup P_u$, where $C'$ is the last player on $P_u$, and it is in $\mathcal{K^*}$ because $|V(M)\cap C'|\le 1$.}
\end{proof}
\begin{cor}\label{cor:cprime}
If there is an $(A,B;C)$-$\delta$-path and an $M \in \SC$ such that $|V(M)\cap A|=1$, then $|V(M)\cap B|=2$. 
\end{cor}
\begin{proof}
Otherwise, the $(A,B;C')$-$\delta$-path given by Lemma~\ref{lem:cprime} would contain a blocking path for $M$ by Lemma \ref{2strong}.
\end{proof}

\begin{lemma} \label{lem:cases}
Let \new{$Q$} be an $M_0$-alternating $A-B-C$ path for some  $A \in \nocyc$, $B \in \nocyc^*$, $C \in \nocyc$. Then there exists an $(A,B;C)$-$\delta$-path or a $(C,B;A)$-$\delta$-path. 
\end{lemma}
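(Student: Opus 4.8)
The plan is to run the general proof technique described before Lemma~\ref{lem:pathcomp}: take the $M_0$-edge $B = bb'$, delete it from $E \cup M_0$, and work with the Gallai--Edmonds decomposition of the resulting graph (possibly after adding one auxiliary edge to make $M_0 \setminus \{B\}$ maximum). Write $P$ as the concatenation of an $M_0$-alternating $A$--$B$ subpath $P_1$ ending at some vertex of $B$, and an $M_0$-alternating $B$--$C$ subpath $P_2$ starting at the other vertex of $B$; by relabeling we may say $P_1$ ends at $b$ and $P_2$ starts at $b'$. If the two vertices of $B$ reach a common odd component $Q_i$ of the GE decomposition — which in particular happens once we have a short connection between the $A$-side and the $C$-side — then by the corollary stated after Figure~\ref{fig:GE}, that component is entered through its player edge $S_i$, and the portion of an alternating path inside $Q_i$ plus the hypomatchability of $Q_i$ gives us the odd cycle we need, with $b$ (or $b'$) as its common vertex $v$ with the remaining path.

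More concretely, I would argue as follows. First dispose of the trivial case: if $A$ and $B$ already share a vertex or $C$ and $B$ already share a vertex, the path $P$ itself, possibly together with a single-edge cycle, is degenerate, so assume the three player edges are vertex-disjoint. Now the key point is that $B \in \mathcal{K}^*$, so by Lemma~\ref{2strong}(a), $B$ lies on no $M_0$-alternating cycle; this legitimizes building the GE decomposition after deleting $B$ (if an alternating $b$--$b'$ path existed through the added edge, it would produce an alternating cycle through $B$ in $G$, contradiction). The two subpaths $P_1$ and $P_2$ witness that, in $G \setminus B$ together with $M_0 \setminus \{B\}$ as a maximum matching, there is an alternating path from $b$ reaching into the component of $A$ and an alternating path from $b'$ reaching into the component of $C$. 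I want to show $A$ and $C$ are reachable from a common vertex of $B$, or symmetrically, so that the hypomatchable odd component containing that reach gives the cycle.

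The main obstacle — and the reason this needs the GE machinery rather than a one-line argument — is that $P_1$ and $P_2$ approach $B$ from the \emph{two different} endpoints $b$ and $b'$, whereas a $\delta$-path needs the cycle to attach at a \emph{single} vertex. So the heart of the proof is to show we can reroute. Here I would invoke Lemma~\ref{lem:pathcomp}-style path composition: consider the GE decomposition of $G' \supseteq P_1 \cup P_2$ after deleting $B$; since $b$ and $b'$ lie in distinct odd components $Q_{k+1} \ni b$ and $Q_{k+2} \ni b'$, trace where $P_1$ and $P_2$ leave those components. The vertex at which $P_1$ exits $Q_{k+1}$ is reached from $b$ by an alternating path inside the hypomatchable component, and likewise for $P_2$ and $b'$. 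Then examine the component $Q_j$ into which the rest of $P_1$ first lands: it is entered at $S_j$, and following $S_j$ into $Q_j$ together with hypomatchability produces an odd cycle; one then checks, by a case distinction on whether this component also meets the $C$-side path, that either an $(A,B;C)$-$\delta$-path or a $(C,B;A)$-$\delta$-path can be assembled — the cycle sits around $B$'s end, with $A$ (resp. $C$) on the cycle and $C$ (resp. $A$) at the tip, and the disjointness needed (no alternating cycle through an intermediate player, paths not re-entering a component) follows from the standard GE property that such a path enters each $Q_j$ at most once, exactly as in the proofs of Lemmas~\ref{lem:pathcomp} and~\ref{lem:cprime}.
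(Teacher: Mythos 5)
There is a genuine gap: your argument never uses the actual defining property of $\mathcal{K}^*$. The only place you invoke $B\in\mathcal{K}^*$ is to conclude that $B$ lies on no $M_0$-alternating cycle, i.e.\ that $B\in\mathcal{K}$ — but the lemma is false under that weaker hypothesis. The paper's own examples (Figure \ref{fig:example1}) have $\mathcal{K}=M_0$, an $M_0$-alternating $A$--$B$--$E$ path, and yet neither an $(A,B;E)$-$\delta$-path nor an $(E,B;A)$-$\delta$-path exists; the conclusion there is precisely that $B\notin\mathcal{L}$ even though $B\in\mathcal{K}$. So no proof that only uses ``no alternating cycle through $B$'' can succeed, and the step where you say a case distinction will ``assemble'' the $\delta$-path is exactly where your ingredients run out. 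Your worry is well placed — $P_1$ and $P_2$ approach $B$ from the two different vertices $b$ and $b'$, which land in the two distinct exposed components $Q_{k+1}$ and $Q_{k+2}$ of the GE decomposition of $(E\cup M_0)\setminus\{B\}$, and nothing in $P$ alone forces the $A$-side and the $C$-side to ever meet in a common odd component. Hypomatchability of a $Q_i$ does not manufacture the missing connection, and in any case the odd cycle of a $\delta$-path is not extracted from hypomatchability in the paper's construction.

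What the paper actually does is take a strong core matching $M$ with $|V(M)\cap B|=1$ (this is what $B\in\mathcal{K}^*$ buys), form the alternating paths $P_A,P_B,P_C$ of $M\Delta M_0$ through $A,B,C$, and build $G'$ by deleting $B$ and adding a \emph{clique on the endpoints} of these three paths (minus the $M$-exposed vertex $b$ of $B$). The clique edges, combined with the fact that $M\in\SC$ forbids certain blocking paths, force an endpoint of $P_A$ or $P_C$ into the same odd component as the other endpoint $x_B$ of $P_B$; this is what makes $P_A$, $P_B$, $P_C$ coincide in the relevant cases and supplies the segment $P_3$ that closes the cycle against $P$. To repair your proof you would need to introduce this matching $M$ and its alternating paths explicitly; the GE bookkeeping you describe (each path enters a $Q_j$ at most once, through $S_j$) is the right toolkit for the disjointness checks, but it is not the engine of the argument.
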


\begin{proof}
Let $M\in \SC$ such that $|V(M)\cap B|=1$ \new{and suppose for the contrary that there exists no $(A,B;C)$-$\delta$-path or $(C,B;A)$-$\delta$-path.} 
Let $P_A$ be the $M\triangle M_0$ path that contains $\pedge_A$ and $P_C$ be the $M\triangle M_0$ path that contains $\pedge_C$.
Let the order of vertices on the $A-B-C$ $M_0$-alternating path \new{$Q$} be $a,a',b,b',c,c'$. 

We use Lemma~\ref{lem:unwind} by setting $A_L,B_L,C_L$ in the lemma as $A,B,C$ with $M_0$-alternating paths \new{$P^L:=Q$ and $P^L_{A_L}:=P_A$, which gives us $u\in \{ a,a'\}$ and $P_{ub}\subseteq P_A\cup Q$.} \new{Let $Q^{\mathrm{new}}=P_{ub}\cup Q_{bc'}$ denote the $A-B-C$ $M_0$-alternating path we get by applying the lemma and possibly replacing the part of $Q$ between $a$ and $b$ because of it. As there is no $(A,B;C)$-$\delta$-path, we get that $Q^{\mathrm{new}}$ is disjoint from $P_u\setminus \{u\}\subseteq P_A$. Note that the part of $Q^{\mathrm{new}}$ and $Q$ between $b$ and $c'$ is the same. } 

\new{Then, we use Lemma~\ref{lem:unwind} again, by setting $A_L,B_L,C_L$ as $C,B,A$ with $P^L:=Q^{\mathrm{new}}$ and $P^L_{A_L}:=P_C$.
The conditions of Lemma~\ref{lem:unwind} are satisfied since $C\in \nocyc$. 
As there is no $(C,B;A)$-$\delta$-path, Lemma~\ref{lem:unwind} gives $w\in C$, and $M_0$-alternating paths $P_{b'w}\subseteq P_C\cup Q^{\mathrm{new}}$ such that $P_w\setminus \{ w\}\subseteq P_C$ is disjoint from $P'=Q^{\mathrm{new}}_{ub'}\cup P_{b'w}$.}

\new{We claim that $P_u\setminus \{u\} \subseteq P_A$ does not intersect $(P'\cup P_w)\subseteq Q^{\mathrm{new}}\cup P_C$. By the above, we know that $P_u\setminus \{ u\}$ is disjoint from $Q^{\mathrm{new}}$. Hence, if $P_u\setminus \{ u\}$ intersects $P'$, then it must be the case that $P_A=P_C$. If $u\in  P_w$, then $P_w\setminus \{ w\}$ is not disjoint from $P'$, contradiction. If $P_w\subseteq P_u$,
then $P_u\setminus \{ u\}$ is not disjoint from $Q^{\mathrm{new}}$, also a contradiction.
Hence, the only possibility left is $P_u\cap P_w=\emptyset$. Then, since the parts of $P'$ and $Q^{\mathrm{new}}$ between $u$ and $b'$ are the same, $P_u\setminus \{ u\}$ must first intersect $P'\cup P_w$ in the part $P'_{b'w}$ of $P'$. If the intersection point $v$ is the vertex of its player that is closer to $w$ on $P'$ then $P'_{uv}\cup P^u_{uv}$ ($P_{uv}^u$ is the part of $P_u$ between $u$ and $v$) is an $M_0$-alternating cycle through $\pedge_A$ and otherwise $P^u_{uv}\cup P'$ is an $(A,B;C)$-$\delta$-path, both are contradictions.}

Hence, we obtain that we can create an $M_0$ alternating path $P_u\cup P'\cup P_w$, where the endpoints both have only one vertex covered by $M$. As $B$ has both vertices covered in $P_u\cup P'\cup P_w$, this leads to a weakly blocking coalition to $M$, contradiction.
\end{proof}
\begin{cor}
    For the subsets of players $\nocyc^*$ and $\nopath$, we have that $\nocyc^*\subseteq \nopath$.
\end{cor}

\begin{lemma} \label{prop:samepath}
    Let $A,B,C\in \nocyc$ and consider an $(A,B;C)$-$\delta$-path $H$, and an $M \in \SC$ such that $|V(M)\cap A|=1$ and $|V(M)\cap B|=2$. Then $\pedge_A$ and $\pedge_B$ are on the same alternating path in $M \Delta M_0$.
\end{lemma}

\begin{proof}
  Suppose that $\pedge_A$ is on the $M\Delta M_0$-alternating path $P_A$ and $\pedge_B$ is on $M\Delta M_0$-alternating path $P_B$ and $P_A\ne P_B$. By $|V(M)\cap A|=1$, $a\in A$ is on one end of $P_A$; let the other end be $d\in D$. Let the endpoints of $P_B$ be $x_B\in X$ and $y_B\in Y$. Using Lemma~\ref{lem:cprime}, we get that we can assume that $|V(M)\cap C|=1$. 

 Let $P_{BC}$ be the $M_0$-alternating path of $H$ starting with $\pedge_B$ and ending with $\pedge_C$ that contains $\pedge_A$. We use Lemma~\ref{lem:unwind} as follows. We set $A_L:= A,B_L=C_L:=D$, $P^L$ as $P_A$, and $P^L_{A_L}$ as taking a maximal subpath of $P_{BC}$ that contains $\pedge_A$, but does not intersect $P_B$ (as $\pedge_A\in P_{BC}$, but $\pedge_A\notin P_B$, there must be such a subpath). Then, Lemma~\ref{lem:unwind} gives a path $P_A'$ disjoint from $P_B$ (neither $P^L$ nor $P^L_{A_L}$ intersects $P_B$) that ends in $d\in D$, contains $\pedge_A$ and either the other end is connected by an edge to $P_B$, or if not, then it ends in $C$ (as $\pedge_B\in P_B$). In both cases, we obtain an $M_0$-alternating path that contains $\pedge_A$ in its interior, and connects two players who both have one vertex covered by $M$.  Thus, the players on this path form a weakly blocking coalition, contradicting $M\in \SC$.
\end{proof}

\begin{lemma} \label{lem:pairs}
    Let $A \in \nocyc^*$, $B,C \in \nopath$ be such that $(A,B)$ and $(A,C)$ are pairs. Then $(B,C)$ is also a pair. 
\end{lemma}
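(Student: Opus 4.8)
The plan is to exploit a strong-core matching that leaves $A$ not doubly covered, together with the two given $\delta$-paths, and then to extract the desired $(B,C;\cdot)$-$\delta$-path by a Gallai-Edmonds argument in the style of Lemma~\ref{lem:cases}.

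First I would use $A\in\mathcal{K}^*$ to fix $M\in\SC$ with $|V(M)\cap A|\le 1$, and reduce to the case $|V(M)\cap A|=1$ (if only matchings with $|V(M)\cap A|=0$ were available, I would work with the $\delta$-path witnessing that $(A,B)$ is a pair: it contains an $M_0$-alternating path that starts with the edge $A$ and passes through $B$, all of whose other player edges are doubly covered by $M$ by Lemma~\ref{2strong}, and from this I would produce a strong-core matching covering exactly one vertex of $A$, or a contradiction). With $|V(M)\cap A|=1$, the discussion preceding the lemma --- i.e.\ Lemma~\ref{lem:cprime} and Proposition~\ref{prop:samepath} applied to the pairs $(A,B)$ and $(A,C)$ --- gives that $|V(M)\cap B|=|V(M)\cap C|=2$, that there are an $(A,B;B^\star)$-$\delta$-path and an $(A,C;C^\star)$-$\delta$-path with $|V(M)\cap B^\star|=|V(M)\cap C^\star|=1$, and that $A$, $B$, $C$ all lie on one alternating path $P$ of $M\Delta M_0$, with an endpoint $x_A$ of $P$ a vertex of $A$ and $B,C$ internal to $P$. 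I may assume $B$ precedes $C$ along $P$ read from $x_A$ (otherwise exchange the data of the two pairs).

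Then, following Lemma~\ref{lem:cases}, I would form $G'$ from $E\cup M_0$ by deleting the player edge $A$ and adding a clique on the set $X$ of $M$-exposed vertices consisting of $x_A$, the other endpoint of $P$, and the tips of the two $\delta$-paths. There is no $M_0$-alternating path between the two vertices of $A$ in $G'$: such a path uses exactly one clique edge (as $A\in\nocyc$), and deleting that edge and restoring $A$ yields either an $M_0$-alternating cycle through $A$ or an $M_0$-alternating path joining two players covered at most once by $M$ and carrying at least three such players (every vertex of $X$ belongs to a player covered at most once by $M$) --- contradicting $A\in\nocyc$ or Lemma~\ref{2strong}. So $G'$ has a Gallai-Edmonds decomposition with hypomatchable odd components $Q_1,\dots,Q_{k+2}$, $a\in Q_{k+1}$, $a'\in Q_{k+2}$, and player edges $S_i=s_is_i'$ entering $Q_i$. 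Since $A$ and $B$ lie on the cycle of the $(A,B;B^\star)$-$\delta$-path, that $\delta$-path contains $M_0$-alternating paths to the tip of $B^\star$ from $a$ and from $a'$ (go around the cycle either way, then down the stem), each ending with the $M_0$-edge $B^\star$; hence the tip of $B^\star$ lies in some $Q_i$ with $i\in[k]$, and one of these two paths passes through $B$ and enters $Q_i$ at $S_i$. The same applies to $C^\star$, and since $X$ is a clique in $G'$, the tips of $B^\star$ and $C^\star$ and all of $X$ lie in the \emph{same} component $Q_i$; combined with the $B$--$C$ subpath of $P$, this locates $B$ and $C$ on the $a$--$Q_i$ and $a'$--$Q_i$ routes, with $S_i$ on their shared portion.

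Finally, by a case analysis like the one in the proof of Lemma~\ref{lem:cases} --- according to which of $B$ and $C$ is met first on the $a$--$Q_i$ and $a'$--$Q_i$ routes, and whether the routes share $S_i$ before or after passing $B$ and $C$ --- I would splice the $s_i$--$B$ and $s_i$--$C$ pieces of the two $\delta$-path cycles, the $B$--$C$ subpath of $P$, and a stem leaving $Q_i$ that terminates with the $M_0$-edge of a player $D\in\nocyc$ (taken from $P$ or from a $\delta$-path stem) into a single odd cycle through both $B$ and $C$ that is $M_0$-alternating apart from one base vertex, with this stem attached; that is a $(B,C;D)$-$\delta$-path, so $(B,C)$ is a pair. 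The crux --- and the expected obstacle --- is exactly this last step: organising the Gallai-Edmonds case analysis so that the pieces assemble into one cycle through \emph{both} $B$ and $C$ (rather than two disjoint blossoms), with no repeated vertices, the correct alternation away from a single base, and a stem whose $M_0$-edge sits at the base and ends at a genuine player of $\nocyc$; this is of the same nature as, but heavier than, the three-case argument of Lemma~\ref{lem:cases}. The remaining points are minor: the reduction to $|V(M)\cap A|=1$, and routine checks that coincidences among the vertices of $X$ (such as $B^\star=C^\star$, or a tip coinciding with an endpoint of $P$) cause no trouble.
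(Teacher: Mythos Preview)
Your approach differs from the paper's and leaves the decisive step unfinished. The paper does \emph{not} try to construct a $(B,C;\cdot)$-$\delta$-path directly. It argues by contradiction: assuming $(B,C)$ is not a pair and fixing $M\in\SC$ with $|V(M)\cap A|=1$, Proposition~\ref{prop:samepath} puts $A,B,C$ on one path $P$ of $M\Delta M_0$, say in the order $A$--$B$--$C$--$D$ with $D\in\kstar$. Now the hypothesis $C\in\cL$, applied to the $B$--$C$--$D$ subpath of $P$, forces a $(D,C;B)$-$\delta$-path --- the alternative $(B,C;D)$-$\delta$-path is excluded precisely by the assumption that $(B,C)$ is not a pair. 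Lemma~\ref{lem:cprime} then upgrades this to a $(D,C;A)$-$\delta$-path $H'$. Playing $H'$ off against the $(A,C;E)$-$\delta$-path $H$ coming from the pair $(A,C)$, a short Gallai--Edmonds argument (delete $A$, add only the single edge $d'e'$) shows that $C$ lies simultaneously inside and outside one odd component $Q_i$, a contradiction. The whole proof pivots on invoking $C\in\cL$ through the contradiction hypothesis; your proposal never uses the $\cL$ assumption on $B$ or $C$, which is exactly why your final splicing step is, in your own words, ``heavier'' and remains only a plan.

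There is also a concrete error in your Gallai--Edmonds setup. Including $x_A\in\{a,a'\}$ in the clique $X$ invalidates the claim that $G'$ has no $M_0$-alternating $a$--$a'$ path: take the clique edge from $x_A=a$ to the tip $b^\star$, followed by the $b^\star$--$a'$ route supplied by the $(A,B;B^\star)$-$\delta$-path itself (up the stem to the base, then around the cycle to $a'$). Restoring $A$ and removing that clique edge yields an $M_0$-alternating $A$--$B^\star$ path that need carry only \emph{two} players covered at most once by $M$, so neither Lemma~\ref{2strong}(c) nor $A\in\nocyc$ gives a contradiction. Excluding $x_A$ from $X$ repairs this, but the main gap --- actually assembling a $(B,C;\cdot)$-$\delta$-path from the pieces --- remains, and without exploiting the $\cL$ hypothesis as the paper does, there is no indication that it can be closed.
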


\begin{proof}
    Suppose for the contrary that $(B,C)$ is not a pair. Let $M \in \SC$ be a matching such that $|V(M) \cap A|=1$, and let $P$ be the path in $M \Delta M_0$ that contains $\pedge_A$. By Lemma \ref{prop:samepath}, $\pedge_B$ and $\pedge_C$ are on $P$ and $|V(M)\cap B|=|V(M)\cap C|=2$. We may assume that $P$ is an $A-B-C-D$ path, where $D \in \nocyc^*$ and $|V(M)\cap D|=1$. Let the order of the vertices of these players along the path be $a,a',b,b',c,c',d,d'$.
    
As $(B,C)$ is not a pair, but $C\in \nopath$, we get that there must be a $(C,D;X)$-$\delta$-path $H$, by using the definition of $\nopath$ for the $B-C-D$ subpath of $P$. We may assume that $|V(M)\cap X|\le 1$ and $X\in \nocyc^*$ by Lemma~\ref{lem:cprime}.

We use Lemma~\ref{lem:unwind} as follows. We set $A_L:=D$, $B_L:=C$ and $C_L:=A$. We let $P^L:=P$ and $P^L_{A_L}$ be the $C-D-X$ $M_0$-alternating path of the $\delta$-path $H$. If we get an $M_0$-alternating path through $\pedge_D$ that ends in $A$ and starts in $X$, then that corresponds to a blocking coalition to $M$, contradiction. Hence, we get a $(D,C;A)$-$\delta$-path $H'$, such that it contains the $A-B-C$ subpath of $P$.

As $(A,C)$ is a pair, we know that there is an $(A,C;Y)$-$\delta$-path $H''$ for some $Y\in \nocyc^*$ with $|V(M)\cap Y|\le 1$ by Lemma~\ref{lem:cprime}. 

Finally, we use Lemma~\ref{lem:unwind} for $A_L:=A$, $B_L:=Z$ and $C_L:=D$, where $Z$ is the unique player with the connector vertex in the $(D,C;A)$-$\delta$-path $H'$. We set $P^L$ to be the $A-B-C-D$ subpath of $H'$ and $P^L_{A_L}$ to be the maximal subpath of the $C-A-Y$ $M_0$-alternating path of $H''$, such that it contains $\pedge_A$, but does not intersect the cycle $\Hcycle (H')$ of $H'$. As $\pedge_A\notin \Hcycle (H')$, there is such a subpath. Then, Lemma~\ref{lem:unwind} gives one of two options. In the first one, we get an $Y-A-D$ $M_0$-alternating path, where all three of these players have at most 1 vertex covered by $M$, contradicting $M\in \SC$. In the second option, we get $u\in A$, and path $P_{uz}$, such that the end of $P_u^L$ is connected with an edge to $\Hcycle (H')$ (since the first option does not hold, and $\pedge_C\in P^L$). However, this implies that we obtain an $M_0$-alternating cycle through $\pedge_A$, contradiction.
\end{proof}
\begin{cor}\label{cor:pairs}
      We have that $\nocyc^* \subseteq \nopath^*$.
\end{cor}

\paragraph{Proof of Theorem~\ref{thm:scmatching}.}
    To see that the conditions are necessary, observe that $|V(M)\cap A|=2$ must hold for every $A \notin \nocyc^*$, hence also for $A\notin \nopath^*$ as $\nocyc^*\subseteq \nopath^*$. Also, if $(A,B)$ is a pair, then $|V(M) \cap (A\cup B)|\geq 3$ must hold by Corollary \ref{cor:cprime}. Lastly, if $A \notin \nocyc_0$, \new{then there is a player $B\in \nocyc$ such that there is an $A-B$ $M_0$-alternating path}, hence $|V(M) \cap A| \geq 1$ must hold by Lemma \ref{prop:nonzero}.

To see sufficiency, we use the alternating path characterization in Lemma \ref{2strong} for membership in $\SC$. Suppose $M$ satisfies the two conditions of Theorem~\ref{thm:scmatching}. Then, the properties of $M$ imply that if a player-edge is in an $M_0$-alternating cycle, then both of its vertices are covered by $M$. Let $P$ be an $M_0$-alternating path from $A \in \nocyc$ to $B\in \nocyc$, where none of the vertices of $A$ are covered by $M$. Then $A \in \nocyc_0$, which contradicts that there is an $M_0$-alternating path to some $B\in \nocyc$ from $A$. Finally, suppose that $P$ is an $A-B-C$ alternating path, where $|V(M) \cap B|=1$. If $A\notin \nopath^*$, $B\notin \nopath^*$ or $C\notin \nopath^*$, then one of $A,B,C$ is covered twice by $M$, and we are done. Otherwise, since $A,C\in \nopath^*\subseteq \nocyc$ and $B\in \nopath^* \subseteq \nopath$, the definition of $\nopath$ implies that $(A,B)$ is a pair or $(B,C)$ is a pair. In the former case, $|V(M) \cap A|=2$, while the latter case implies $|V(M) \cap C|=2$ by the second condition on $M$. Thus, the conditions of Lemma \ref{2strong} are satisfied and therefore $M$ is in the strong core.
\qed

\medskip 
To obtain polynomial-time algorithms, it remains to show that membership in $\nopath$ and $\nopath^*$ can be decided in polynomial time.
\new{Let $\alpha$ denote the inverse Ackermann function. Then, it is known that detecting a negative weight cycle in an undirected graph with integer edge costs bounded by a constant can be done in $\mathcal{O}(|V|^{2.5}\log (|V|)^{1.5}\alpha (|V|^2,|V|)^{0.5})$ time~\cite{williamson2016fast} (Corollary 2). 
Hence, for any constant $\varepsilon>0$ it is true that it can be done in $\mathcal{O}(|V|^{2.5+\varepsilon})$ time. We will use this fact in our next lemma.}

\begin{lemma} \label{lem:deltaalg}
   For given $A,B,C \in \nocyc$, it can be decided in $O(|V|^{2.5+\varepsilon})$ time if an $(A,B;C)$-$\delta$-path exists.  
\end{lemma}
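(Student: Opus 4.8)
The plan is to decide the existence of an $(A,B;C)$-$\delta$-path by reducing it to polynomially many matching computations, in the same spirit as the verification algorithms behind Theorems \ref{2score-verif} and \ref{2weak}: there, ``lying on an $M_0$-alternating cycle'' was reduced to the existence of a perfect matching after deleting a player edge, and ``being joined by an $M_0$-alternating path'' to the existence of a matching covering all but two vertices after deleting two player edges. Here the target structure is more complicated, so the reduction will be guarded by a bounded amount of guessing, and the work is in proving that the reduction is \emph{sufficient}, not only necessary.

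First I would fix the junction vertex. By definition an $(A,B;C)$-$\delta$-path has a distinguished vertex $v$ lying on both the odd cycle and the path, with the $M_0$-edge $vv'$ on the path; since $A$ and $B$ lie on the cycle while $vv'$ lies on the path, the player of $v$ is neither $A$ nor $B$, so there are only $O(|V|)$ candidates for $v$, and I would test each of them (with $v':=M_0(v)$) in turn. For a fixed $v$, such a structure decomposes into two parts meeting only in $v$: an odd cycle $K$ through $v$, $A$ and $B$ that is $M_0$-alternating apart from $v$ and does not use $vv'$; and an $M_0$-alternating path $P$ from $v$ to a vertex of $C$ starting with $vv'$, ending with the player edge $C$, and with $V(P)\cap V(K)=\{v\}$. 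I would also guess the two $E$-edges of $K$ incident to $v$ (a further $O(|V|^2)$ factor), after which $K$ must be an $M_0$-alternating path with an $E$-edge at each end joining those two neighbours of $v$ and passing through $A$ and $B$. Forcing a cycle through a prescribed player edge is unproblematic: since $M_0$ is a perfect matching, a cycle that is $M_0$-alternating apart from $v$ contains a vertex of a player other than the player of $v$ if and only if it uses that player's edge; hence ``$K$ passes through $A$ and $B$'' is the same as ``$K$ passes through the vertices $a$ and $b$'', which I would pin down by one more bounded round of guessing (the cyclic positions of $A$ and $B$ on $K$, which splits $K$ into a constant number of $M_0$-alternating subpaths with prescribed end-edge types between prescribed vertices). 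Each of the atomic sub-questions produced this way (existence of an $M_0$-alternating subpath of a given type between two given vertices, or of an almost perfect matching in a modified graph) is settled by a single matching computation, exactly as in Theorem \ref{2score-verif}, so the running time is polynomial over all (polynomially many) guesses.

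The main obstacle is correctness: a plain matching computation will happily reroute a path through the region meant to host the cycle, or move $A$ or $B$ onto the stem, so one must show that the test does not over-accept. For this I would reuse the Gallai--Edmonds machinery of Section \ref{sec:scexistence}. Given the guesses, form the modified graph obtained from $E\cup M_0$ by removing $vv'$ (and the $M_0$-edges at the other guessed vertices) and adding small gadget edges encoding the required end-edges, arranged so that the remaining part of $M_0$ is a maximum matching; then take its Gallai--Edmonds decomposition, with the usual odd components $Q_j$ and entering player edges $S_j$. The property already exploited in Lemmas \ref{lem:pathcomp}, \ref{lem:cprime} and \ref{lem:cases} -- that an $M_0$-alternating path leaving an odd component $Q_i$ other than through $S_i$ re-enters any other component $Q_j$ only through $S_j$, and never enters a component twice -- lets one argue that whenever the modified graph has the required matching, the associated alternating paths can be rearranged into cycle and stem pieces that are vertex-disjoint in the way the $\delta$-path demands and that keep $A,B$ on the cycle and $C$ at the tip; conversely, if no $(A,B;C)$-$\delta$-path with junction $v$ and with the guessed data exists, the Gallai--Edmonds decomposition exhibits the obstruction (two disjoint subpaths being forced through a common $S_i$, as in the contradictions derived in those lemmas). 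Ranging over all guesses then decides the existence of an $(A,B;C)$-$\delta$-path and reconstructs one in the positive case, which in particular also justifies the claim used earlier that membership in $\cL$ and $\cL^*$ is polynomial-time decidable.
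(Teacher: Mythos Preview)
Your plan identifies the right target (reduce to matching computations guarded by bounded guessing) and correctly flags disjointness as the crux, but what you have written is a strategy, not a proof. Two concrete gaps:

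\emph{The test is never specified.} You guess $v$, its two $E$-neighbours on the cycle, and the cyclic order of $A,B$, then say that the pieces are ``settled by a single matching computation'' and that one should ``form the modified graph \dots\ adding small gadget edges encoding the required end-edges, arranged so that the remaining part of $M_0$ is a maximum matching.'' But you never say which edges to add or remove, nor what matching property to test (perfect? missing two? missing four?). The difficulty is precisely that you need several vertex-disjoint alternating pieces simultaneously (three on the cycle and one on the stem), and there is no off-the-shelf reduction of ``$k$ disjoint alternating paths with prescribed endpoints'' to a single matching computation. Without the actual construction, the claim that each guess is decidable by one matching call is unsupported.

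\emph{The correctness argument is deferred, not given.} You assert that the Gallai--Edmonds properties ``let one argue'' that any accepting matching can be rearranged into disjoint cycle and stem pieces with $A,B$ on the cycle and $C$ at the tip. But this is exactly the hard part of the lemma, and nothing in Lemmas~\ref{lem:pathcomp}--\ref{lem:cases} gives you disjointness of \emph{four} simultaneous pieces for free. Moreover, your GE decomposition is taken after removing $vv'$, but the junction player need not lie in $\mathcal{K}$, so $M_0\setminus\{vv'\}$ need not be maximum; ``arranged so that the remaining part of $M_0$ is a maximum matching'' is a promise, not a construction.

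For contrast, the paper does not guess the junction at all. It removes the player edge $A$ (which is in $\mathcal{K}$ by hypothesis, so $M_0\setminus\{A\}$ is maximum) and takes the GE decomposition of $E\cup M_0\setminus\{A\}$. The odd components $Q_i$ and their entering edges $S_i$ then \emph{determine} where $B$ and $C$ can sit; the algorithm checks a couple of easily testable conditions (existence of an alternating $a$--$s_j$ path through $B$, or of two disjoint alternating paths from $\{a,a'\}$ to $\{s_i,s_j\}$), and when they pass, the $\delta$-path is built constructively by concatenating explicit subpaths whose disjointness follows directly from the ``each alternating path enters $Q_j$ through $S_j$'' property. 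The junction vertex emerges from the construction; it is never guessed.
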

\begin{proof}
As mentioned in the proof of Theorem~\ref{2score-verif}, we can decide in $\mathcal{O}(|V|^{2.5})$ time if there is an $A-B-C$ $M_0$-alternating path for $A,B,C\in \nocyc$. To make sure that $B$ is the middle player, we can iterate through all pairs $(u,v)\in \{ a,a'\} \times \{ c,c'\}$ and delete the edges in $E$ that are adjacent to $u$ and $v$, when searching for an almost-perfect matching.

If there is no such $M_0$-alternating path, then we can conclude that no $(A,B;C)$-$\delta$-path exist. Otherwise, let $P$ be the $A-B-C$ $M_0$-alternating path found, and suppose the order of the vertices is $a,a',b,b',c,c'$.

Suppose that there exists an $(A,B;C)$-$\delta$-path $H$. Using Lemma~\ref{lem:unwind} for $A_L:=A,B_L:=B$ and $C_L:=C$, $P^L:=P$ and the $B-A-C$ path subpath of $H$ as $P^L_{A_L}$, we obtain that there must exist an $(A,B;C)$-$\delta$-path $H'$ that contains the part of $P$ from $\pedge_B$ to $\pedge_C$. Thus, it is enough to decide if there exists such an $(A,B;C)$-$\delta$-path or not.
To do this, first, we create a new graph from $G$ as follows.
\begin{enumerate}
    \item Add $M_0$.
    \item Delete $c'$.
    \item Contract all vertices strictly after $b'$ on $P$ into $c$. Let this new vertex be $\hat{c}$.
    \item Delete all incident edges to $b'$, except $\pedge_B$ and $b'\hat{c}$.
\end{enumerate}

Let the graph obtained be $\hat{G}$. We assign a weight of $-1$ to the edges of $M_0$ in $\hat{G}$, except that we assign $-2$ to $\pedge_A$ and $\pedge_B$. All other edges have a weight of $1$.

We claim that there is a negative weight cycle in $\hat{G}$ if and only if there is an $(A,B;C)$-$\delta$-path in $E\cup M_0$. Since the former can be decided in polynomial time, the statement will also follow from this.

Suppose that there is an $(A,B;C)$-$\delta$-path $H$; we can assume that the part $P_{bc'}$ of $P$ from $b$ to $c'$ is included in $H$. It is easy to see that $(H\setminus P_{bc'})\cup \{\pedge_B,b'\hat{c}\}$ will give a cycle in $\hat{G}$ with weight $-1$, as it alternates between $E$ and $M_0$ except from $\hat{c}$ and contains both $\pedge_A$ and $\pedge_B$.

Conversely, suppose that there is a negative weight cycle in $\hat{G}$. As there is no player edge adjacent to $\hat{c}$, no $M_0$-alternating cycle goes through $\hat{c}$. Hence, as $A,B\in \nocyc$, there cannot be $M_0$-alternating cycles in $\hat{G}$ that contain either $\pedge_A$ or $\pedge_B$. Since $M_0$ forms a matching, any negative weight cycle must contain $\pedge_A$ or $\pedge_B$. Furthermore, any negative weight cycle that is not $M_0$-alternating must contain both. Hence, by the previous observation, there must be a cycle in $\hat{G}$ that is $M_0$-alternating except at one vertex $v$, and contains $\pedge_A$ and $\pedge_B$. Since $b'$ has only $\pedge_B$ and $b'\hat{c}$ as incident edges, we must have that $v=\hat{c}$. Then, by $A,B\in \nocyc$, it is easy to see that this cycle gives an $(A,B;C)$-$\delta$-path in $E\cup M_0$ (by adding back $\pedge_C$ and undoing the contractions). 

As finding a negative cycle in an undirected graph with edge weights $\{ -2,-1,1\}$ can be done in $O(|V|^{2.5+\varepsilon})$ time~\cite{williamson2016fast}, the lemma follows.
\end{proof}

 The bottleneck for the running time of the algorithms is the verification of membership in $\nopath$ and $\nopath^*$, and the construction of the graph $G^*$. We did not try to optimize the running time of these subroutines; the naive approach requires $O(|V|^3)$ maximum size matching algorithms and negative cycle detections in an undirected graph with edge costs in $\{-2,-1,1\}$. Since both of these can be done in time $O(|V|^{2.5+\varepsilon})$~\cite{blum1990new,williamson2016fast}, we get a total running time of $O(|V|^{5.5+\varepsilon})$. 
\begin{cor}\label{cor:scoredecision}
\strongex-2 is solvable in $O(|V|^{5.5+\varepsilon})$ time.
\end{cor}

\paragraph{Optimization on the strong core.}
An interesting and useful consequence of Theorem \ref{thm:scmatching} is that for given edge costs $c \in \mathbb{Q}^E$, we can find a minimum cost matching in the strong core. This is in stark contrast to the case of the weak core, where optimization is NP-hard even for players of size 1, as shown in Theorem \ref{thm:1opthard}. Indeed, by Theorem~\ref{thm:scmatching}, the strong core can be characterized by setting lower bounds on $|U_i\cap V(M)|$ for some partition $U_1\cup \cdots \cup U_l$, so by Lemma~\ref{lem:lowerbound}, after the partition and the bounds are computed, a minimum cost such matching can be found in $O(|V|^{4})$  time, if one exists.
\begin{cor}\label{cor:scoreopt}
If the strong core is nonempty, then for any edge costs $c \in \mathbb{Q}^E$, we can find a minimum-cost matching in the strong core in $O(|V|^{5.5+\varepsilon})$ time.   
\end{cor}

\section{Small number of players}\label{sec:constno}

In this section, we show that if the number of players is small, i.e., bounded by a constant, then the existence of a strong (resp. weak) core solution, the verification of membership in the cores, and optimization on the cores are all solvable in polynomial time, with an algorithm that is only exponential in the number of players. 
This is relevant for example in the IKEP application, where players correspond to countries, which \new{often} implies a fairly small upper bound on the number of players, \new{usually between 3 and 8}~\cite{biro2019building,valentin2019international}.  We shall see that the tractability of this case is a direct consequence of Lemma \ref{lem:lowerbound}.

Recall that the instance is given by a graph $G=(V,E)$ and a partition $\partition=(V_1, V_2 , \dots , V_m)$ of $V$, where $m$ is the number of players. In this section, $m$ is regarded as a small constant, but our results are true for arbitrary $m$. However, the running times become infeasible for large $m$, because of the exponential dependence on $m$.

First, we consider the verification problems.  Lemma \ref{lem:lowerbound} immediately implies that membership of a given matching $M$ in the weak core can be decided in $\mathcal{O}(2^m|V|^{2.5})$ time. Indeed, the number of possible coalitions is $2^m$, and for a given coalition
$\{ i_1, i_2,\dots,i_k\} $, we can decide, using Lemma \ref{lem:lowerbound}, if there is a matching $M'$ in the
induced subgraph $G[V_{i_1} \cup V_{i_2}\cup \dots \cup V_{i_k}]$ such that $|V(M') \cap V_{i_j}| \geq |V(M) \cap V_{i_j}|+1$ for every $j \in [k]$ in $\mathcal{O}(|V|^{2.5})$ time.

Membership in the strong core can be decided similarly, but for every coalition
$\{ i_1, i_2,\dots,i_k\}$ and every $\ell\in [k]$, we check if there is a matching $M'$ in the
induced subgraph $G[V_{i_1} \cup V_{i_2}\cup \dots \cup V_{i_k}]$ such that $|V(M') \cap V_{i_j}| \geq |V(M) \cap V_{i_j}|$ for every $j\in [k]$ and $|V(M') \cap V_{i_\ell}| \geq |V(M) \cap V_{i_\ell}|+1$.
 This leads to a runtime of $\mathcal{O}(2^m|V|^{3.5})$.

\begin{cor}\label{XP-verif}
\weakverif\ can be solved in $\mathcal{O}(2^m|V|^{2.5})$ time, and \strongverif\ can be solved in $\mathcal{O}(2^m|V|^{3.5})$ time.
\end{cor}

Deciding non-emptiness of the weak and strong cores is somewhat more difficult, but still polynomial-time solvable if $m$ is constant. The crucial observation is that the membership of a matching $M$ in these cores depends only on the values $|V(M)\cap V_i|$ ($i \in [m]$). Let us call an integer vector \emph{maximal} with respect to a property if the property is violated by increasing any component of the vector by 1.  Using Lemma \ref{lem:lowerbound}, we can find in $\mathcal{O}(|V|^{m+2.5})$ time all maximal vectors $x \in \Zset_+^m$ for which a matching $M$ with $|V(M) \cap V_i|\ge x_i$ for every $i \in [m]$ exists. Indeed,  $x_i \leq |V_i|$ ($i \in [m]$) can be assumed, so we can check all possible $\mathcal{O}(|V|^m)$ vectors, each in $\mathcal{O}(|V|^{2.5})$ time. Let $\mathcal{X}$ denote the set of such maximal vectors.
Since for any matching $M$ covering a set $U\subseteq V$ of vertices, there exists a maximum size matching $M'$ that also covers $U$, and so $|V(M')\cap V_i|\ge |V(M)\cap V_i|$ $\forall i\in [m]$, we can assume that for each $x\in \mathcal{X}$, the sum of its coordinates is two times the size of a maximum matching. Hence, $|\mathcal{X}|\le \mathcal{O}(|V|^{m-1})$

It is easy to see that if the weak (resp. strong) core is nomempty, then it contains a matching $M$, for which $x_i=|V(M)\cap V_i|$ gives a maximal vector $x\in \mathcal{X}$ (else, there is a matching, where everyone is weakly better off, and some players are strictly better off -- so a strongly (resp. weakly) blocking coalition to the latter would also block $M$). 
By Corollary~\ref{XP-verif}, for each $x\in \mathcal{X}$, we can check in $\mathcal{O}(2^m|V|^{3.5})$ time, whether \emph{every} matching $M$ with $|V(M) \cap V_i| =  x_i$ ($i \in [m]$) is in the strong or weak core. If the answer is negative for every $x \in \mathcal{X}$, then the core is empty; otherwise, it is not empty.

\begin{cor}\label{XP-ex}
   \weakex\ and \strongex\ are solvable in $\mathcal{O}(|V|^{m+2.5})$ time. Furthermore, a matching in the weak or strong core can be found in the same time, if one exists.
\end{cor}

Optimization on the weak and strong cores can be done by similar ideas.

\begin{theorem}
Given an NTU partitioned matching game with $m$ players, and a cost vector $c \in \mathbb{Q}^E$, we can find a minimum cost matching in the weak (resp.\ strong) core in $\mathcal{O}(|V|^{m+4})$ time. 
\end{theorem}

\begin{proof}
    For a given vector $x \in \Zset_+^m$ for which $x_i \leq |V_i|$ ($i \in [m]$), we can decide using Lemma \ref{lem:lowerbound} in $\mathcal{O}(2^m|V|^{2.5})$ time whether there is a coalition $\{i_1,\dots,i_k\}$ and a matching $M$ in the
induced subgraph $G[V_{i_1} \cup V_{i_2}\cup \dots \cup V_{i_k}]$ such that $|V(M) \cap V_{i_j}| \geq x_{i_j}+1$ for every $j \in [k]$. Let $\mathcal{X}_w$ denote the set of vectors for which there is no such coalition. Then, $|\mathcal{X}_w|\le \mathcal{O}(|V|^{m})$. Similarly, we can decide in $\mathcal{O}(2^m|V|^{3.5})$ time whether there is a coalition $\{i_1,\dots,i_k\}$, an index $\ell \in [k]$, and a matching $M$ in the
induced subgraph $G[V_{i_1} \cup V_{i_2}\cup \dots \cup V_{i_k}]$ such that $|V(M) \cap V_{i_j}| \geq x_{i_j}$ for every $j \in [k]$ and $|V(M) \cap V_{i_{\ell}}| \geq x_{i_{\ell}}+1$. Let $\mathcal{X}_s$ denote the set of vectors for which there is no such coalition. Then, $|\mathcal{X}_s|\le \mathcal{O}(|V|^{m})$. 

For $x\in \mathcal{X}_w$, let $\mathcal{M}_x$ denote the set of matchings in $G$ satisfying $|V(M) \cap V_i|\ge x_i$ for every $i \in [m]$. Note that $x\in \mathcal{X}_w$ means that every matching in $\mathcal{M}_x$ is in the weak core; furthermore, every matching in the weak core is in $\mathcal{M}_x$ for some $x\in \mathcal{X}_w$. 
Then, we can find a minimum cost matching in $\mathcal{M}_x$ for each $x\in \mathcal{X}_w$ in $\mathcal{O}(|V|^4)$ time by Lemma~\ref{lem:lowerbound}. This, in total, gives a runtime of $\mathcal{O}(|V|^{m+4})$.

The strong core case is analogous.



\end{proof}

\section{Hardness results for partition classes of constant size}\label{sec:hard}

In contrast to the case where the number of players is bounded by a constant, bounding the sizes of the partition classes by a constant larger than 2 does not lead to polynomial-time solvable problems. In this section, we show hardness results of this type. We have seen in Section \ref{sec:size2} that our problems are tractable when all partition classes have size at most 2. Here we show that having partition classes of size 3 leads to coNP-hardness for most problems. 

\subsection{Weak core}\label{weak}

Our first result is the hardness of verifying if a matching belongs to the weak core. 

\begin{theorem}\label{thm:weakdecision}
    \weakverif-3 is coNP-complete, even if the graph is bipartite.
\end{theorem}
\begin{proof}
    We reduce from the NP-complete problem \textsc{x3c} ({\sc exact cover by 3-sets}). An instance of \textsc{x3c} consists of $3n$ elements $\mathcal{X}=\{ 1,2,\dots,3n\}$ and a family of 3-sets $\mathcal{S}=\{ S_1,S_2,\dots,S_{m}\}$ and the question is whether there exists $\mathcal{S'}\subseteq \mathcal{S}$ such that the sets in $\mathcal{S'}$ cover each element exactly once. 

    Let $I$ be an instance of \textsc{x3c}. We create a bipartite instance $I'$ of the \weakverif-3 as follows. 
First, we describe the vertex set $V$ of the graph.
    \begin{itemize}
        \item [--] For each element $i\in [3n]$ we create a vertex $a_i$.
        \item [--] We also create connector vertices $b_{1,2}^1,b_{1,2}^2,b_{2,3}^2,b_{2,3}^3\dots,b_{3n-1,3n}^{3n-1},b_{3n-1,3n}^{3n} $.
        \item [--] For each set $S_j\in \mathcal{S}$, we create vertices $c_j^1,c_j^2,c_j^3$ and vertices $d_j^1,d_j^2$. 
        
    \end{itemize}

    We now describe the edges of the graph.
    \begin{itemize}
        \item [--] For each $j\in [m]$, $l\in [2]$ we have an edge $c_j^ld_j^l$.
        \item [--] For each $i\in [3n-1]$, we have edges $b_{i,i+1}^ib_{i,i+1}^{i+1}$.
        \item [--] For each set $S_j=\{ j_1,j_2,j_3\}$ with $j_1<j_2<j_3$ we have edges $c_j^1a_{j_1},c_j^2a_{j_2},c_j^3a_{j_3}$.
    \end{itemize}

    The graph is bipartite, with bipartition $U=\{ c_1^1,c_1^2,c_1^3,\dots,c_m^1,c_m^2,c_m^2\} \cup \{ b_{i,i+1}^i\mid i\in [3n-1]\}$ and $W=V\setminus U$.

    Finally, we describe the partition of the players. For each $j\in [m]$, we have a player $C_j=\{ c_j^1,c_j^2,c_j^3\}$ and a player $D_j=\{ d_j^1,d_j^2\}$. For each $i\in \{ 2,\dots,3n-1\}$ we have a player $A_i=\{ b_{i-1,i}^i,a_i,b_{i,i+1}^i\}$. In addition, we have players $A_1=\{ a_1,b_{1,2}^1\}$, and $A_{3n}=\{ b_{3n-1,3n}^{3n},a_{3n}\}$. Clearly, all players have size at most 3.

    We create a matching $M$ as follows. We add $c_j^ld_j^l$ for $j\in [m],l\in [2]$ and $b_{i,i+1}^ib_{i,i+1}^{i+1}$ for $i\in [3n-1]$. The construction is illustrated in Figure~\ref{fig:weakverif}

To show coNP-hardness, we show that $I$ admits an exact 3-cover if and only if $M$ is not in the weak core.

    \begin{claim}
        If there is an exact 3-cover in $I$, then $M$ is not in the weak core.
    \end{claim}
    \begin{proof}
        Let $\{ S_{l_1},\dots S_{l_n}\}$ be an exact 3-cover in $I$. We show that the players $\{ C_{l_1},\dots,C_{l_n},A_1,\dots,A_{3n}\} $ form a strongly blocking coalition. As $\{ S_{l_1},\dots S_{l_n}\}$ is an exact cover, there must be a perfect matching between the included $c_j^l$ and $a_i$ vertices. Also, we can include all edges of the form $(b_{i,i+1}^i,b_{i,i+1}^{i+1})$ for $i=1,\dots,3n-1$. Hence, there is a perfect matching on the vertex set of these players. As each of these players has a vertex exposed in $M$ ($c_j^3$ for $C_j$, $a_i$ for $A_i$), it follows that $M$ is not in the weak core.
    \end{proof}
    \begin{claim}
        If $M$ is not in the weak core, then there is an exact 3-cover in $I$.
    \end{claim}
    \begin{proof}
        Suppose that $M$ is not in the weak core. This means that there is a strongly blocking coalition $\mathcal{P}$ to $M$ with a matching $M'$. We can assume that there is a player $A_i\in \mathcal{P}$. Otherwise, there could be only $C_j$ players in $\mathcal{P}$ (the $D_j$ players have all vertices covered by $M$), and they induce no edges between them, a contradiction.

        Observe that players $A_1,\dots,A_{3n}$ must all be in $\mathcal{P}$, because for $A_i$ to strictly improve, $b_{i-1,i}^{i}$ and $b_{i,i+1}^i$ (if they exist) must be matched, so $b_{i-1,i}^{i-1}b_{i-1,i}^{i},b_{i,i+1}^{i}b_{i,i+1}^{i+1}\in M'$. This implies that $A_{i-1},A_{i+1}\in \mathcal{P}$ and similarly we get that $A_1,\dots,A_{3n}$ are all in $\mathcal{P}$.

        Finally, as each player must strictly improve, every $a_i$ vertex is matched in $M'$ and so they are matched to $C_j$ players. Whenever a $C_j$ player participates in the blocking coalition by matching one of its vertices to an $a_i$ vertex, $\{ c_j^1a_{j_1},c_j^2a_{j_2},c_j^3a_{j_3} \} \subseteq M'$, because $D_j\notin \mathcal{P}$ and $C_j$ must strictly improve. By combining this with the fact that each $a_i$ is matched, we obtain that these players must correspond to an exact 3-cover. 
    \end{proof}
The statement of the theorem follows.
\end{proof}

\begin{figure}
    \centering
    \includegraphics[width=0.5\linewidth]{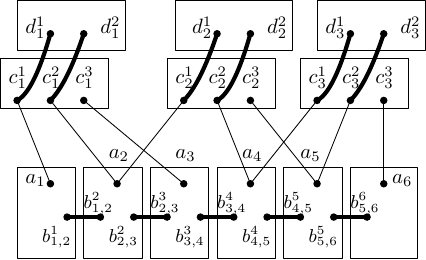}
    \caption{Illustration for Theorem~\ref{thm:weakdecision}. The figure depicts the instance $I'$ corresponding to an {\sc x3c} instance with $S_1=\{ 1,2,3\} , S_2=\{ 2,4,5\}$ and $S_3=\{ 4,5,6\}$. Bold edges denote the edges of $M$.}
    \label{fig:weakverif}
\end{figure}

Intuitively, it would be tempting to conjecture that the weak core is always nonempty in any NTU partitioned matching game.
However, the following example (communicated to us by Zsuzsanna Jankó) shows that this is not the case.

\begin{example}[Empty weak core]\label{ex}
Let $A=\{a_1,a_2,\dots,a_7\}$, $B=\{b_1,\dots,b_7\}$, and $C=\{c_1,\dots,c_7\}$ be three players each having seven vertices. The graph consists of five disjoint complete graphs, three $K_5$'s: $\{a_1,c_2,c_3,b_4,b_5\}$, $\{b_1,a_2,a_3,c_4,c_5\}$ and $\{c_1,b_2,b_3,a_4,a_5\}$, and two $K_3$'s: $\{a_6,b_6,c_6\}$ and $\{a_7,b_7,c_7\}$.

The weak core of this instance is empty. Indeed, if there were a matching in the weak core, then there would also be a maximum-size matching in it. Let $M$ be any maximum-size matching, covering 16 vertices. We may assume that every player has at least 4 vertices covered, since otherwise a single player would block $M$. The two players who have the least number of vertices covered have together at most $\frac{2}{3}\times 16<11$ vertices covered, and therefore, they have at most 10. If both of them have 5 covered in $M$, then they form a blocking coalition because there is a matching where both of them have 6 vertices covered. If one of them has 4 vertices covered while the other 6, then they form a blocking coalition because there is a matching where the first player has 5 vertices covered while the second has 7. Thus, every matching has a strongly blocking coalition.
\end{example}

We use Example~\ref{ex} in our reductions to prove that deciding the emptiness of the weak core is hard.


\paragraph{The \NO-gadget.} We introduce the key gadget in the proof, the \emph{\NO-gadget}. 
A \NO-gadget between $u$ and $v$ will ensure that (i) neither $u$, nor $v$ can be connected to the \NO-gadget in a matching in the weak core, but (ii) if a matching satisfies (i) for every \NO-gadget and contains appropriate edges from each \NO-gadget, then the matching is in the weak core if and only if it is in the weak core of the instance obtained by substituting each \NO-gadget (between some vertices $u',v'$) by an edge ($u'v'$). Thus, we can use \NO-gadgets to simulate forbidden edges.

Specifically, a \NO-gadget between $u$ and $v$ is constructed as follows. Let $H_{uv}$ be a copy of the instance in Example~\ref{ex} with players $A_{H_{uv}},B_{H_{uv}},C_{H_{uv}}$ and let $a_1^{H_{uv}}$ be a vertex of $A_{H_{uv}}$ in the clique of size 5 which contains only a single vertex of $A_H$. We add a player $S^{uv}$ with four vertices $s_1^{uv},s_2^{uv},s_3^{uv}$ and $s_4^{uv}$, and a player $T^{uv}$ with two vertices $t_1^{uv}$ and $t_2^{uv}$. 

The edges are the following. We keep all edges of $H_{uv}$ and add $us_1^{uv},s_1^{uv}t_1^{uv},s_2^{uv}s_3^{uv},s_2^{uv}a_1^{H_{uv}},s_4^{uv}t_2^{uv},s_4^{uv}v$. See Figure \ref{specialfig} for an illustration.

\begin{figure}[h]
\begin{center}
\includegraphics[scale=.6]{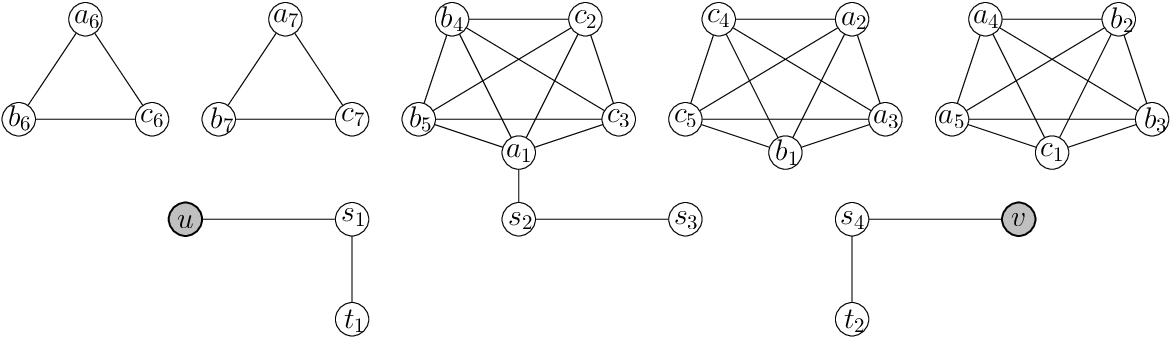}
\caption{The \NO-gadget between $u$ and $v$. Upper indices have been omitted, since the figure illustrates only a single \NO-gadget.}
\label{specialfig}
\end{center}
\end{figure}



Let \weakexforb\ denote the problem, which asks if there is a matching in the weak core that excludes a set $F$ of forbidden edges.

\begin{lemma}\label{special}
   \weakexforb\ with players of size at most $d$ can be reduced to \weakex\ with players of size at most $\max \{ 7,d\}$. 
\end{lemma}
\begin{proof}

Let $I=(G,\partition, F)$ be an instance of \weakexforb\ with players of size at most $d$. We create an instance $I'=(G',\partition')$ of \weakex\ by replacing each edge $uv \in F$ with a \NO-gadget between $u$ and $v$. As players in the \NO-gadget have size at most 7, $I'$ has players of size at most $\max \{ d,7\}$.

We claim that $I$ admits a weak core matching disjoint from $F$ if and only if $I'$ admits a matching in the weak core.

Suppose that $I'$ has a matching $M'$ in the weak core. Let $M=M' \cap E(G)$.
As we have seen in Example~\ref{ex}, for any \NO-gadget, the copy $H_{uv}$ of Example~\ref{ex} admits no weak core matching without additional edges, so $s_2^{uv}a_1^{H_{uv}}\in M'$ for all \NO-gadgets. Then, we also have that $s_1^{uv}t_1^{uv}\in M'$ and $s_4^{uv}t_2^{uv}\in M'$, else $\{ T^{uv},S^{uv}\}$ is a strongly blocking coalition to $M'$ with matching $\{ s_1^{uv}t_1^{uv},s_2^{uv}s_3^{uv},s_4^{uv}t_2^{uv}\} $. By this observation, any edge of $M'$ incident to a vertex $v\in V(G)$ is also present in $M$, so $|V_i\cap V(M)|=|V_i\cap V(M')|$ for all $V_i\in \partition$. Furthermore, it also follows that $M\cap F=\emptyset$. 

Suppose for the contrary that $M$ is not in the weak core of $I$. Let $\mathcal{P}$ be a strongly blocking coalition to $M$ with some matching $M_{\mathcal{P}}$. We claim that $\mathcal{P}\cup \{ S^{uv}\mid uv\in M_{\mathcal{P}}\cap F\}$ strongly blocks $M'$ with the matching $M_{\mathcal{P}}\setminus F\cup \{ us_1^{uv}, s_2^{uv}s_3^{uv},s_4^{uv}v\mid uv \in M_{\mathcal{P}}\cap F\}$. Indeed, as $|V_i\cap V(M)|=|V_i\cap V(M')|$ for all $V_i\in \partition$, all players $i\in \mathcal{P}$ improve from $M'$. Furthermore, the participating $S^{uv}$ players get all their vertices matched, while $s_3^{uv}$ was unmatched in $M'$ (as $s_1^{uv}a_1^{H_{uv}}\in M'$). This contradicts that $M'$ is in the weak core.

Assume that $I$ admits a matching $M$ in the weak core such that $M\cap F=\emptyset$. Then, we can define a matching $M'$ in $I'$ as $M\cup \{s_1^{uv}t_1^{uv},s_2^{uv}a_1^{H_{uv}},s_4^{uv}t_2^{uv},b_6^{H_{uv}}c_6^{H_{uv}},b_7^{H_{uv}}c_7^{H_{uv}},a_i^{H_{uv}}a_{i+1}^{H_{uv}},b_i^{H_{uv}}b_{i+1}^{H_{uv}},c_i^{H_{uv}}c_{i+1}^{H_{uv}}  \mid uv\in F, i=2,4\}$. See Figure~\ref{Mspec} for an illustration.

We claim that $M'$ is in the weak core of $I'$. Suppose that $\mathcal{P}'$ blocks $M'$ with a matching $M'_{\mathcal{P}'}$. First, observe that no player $T^{uv}$ can be in $\mathcal{P}'$, since both vertices are matched in $M'$. Hence, if $S^{uv}\in \mathcal{P}'$ for some $uv\in E(G)$, then all of its vertices must be matched, so for any $uv\in F$ either we have that $\{ us_1^{H_{uv}},s_4^{H_{uv}}v\}\subseteq  M'_{\mathcal{P}'}$ or $\{ us_1^{H_{uv}},s_4^{H_{uv}}v\}\cap  M'_{\mathcal{P}'}=\emptyset$.

Let $M_{\mathcal{P}}$ be the matching we get in $I$ by deleting any edge from a \NO-gadget in $M'_{\mathcal{P}'}$, and adding the edge $uv$ if $\{ us_1^{H_{uv}},s_4^{H_{uv}}v\}\subseteq  M'_{\mathcal{P}'}$. Then, $M_{\mathcal{P}}$ is a matching and for any $V_i\in \partition$, $|V_i\cap M_{\mathcal{P}}|=|V_i\cap M'_{\mathcal{P}'}|$. As $|V_i\cap M|=|V_i\cap M'|$ also holds for $V_i\in \partition$, this gives a contradiction, unless $M_{\mathcal{P}}=\emptyset$ and $\mathcal{P}'$ contain only players in \NO-gadgets. Then, no $S^{uv}$ player can be in $\mathcal{P}'$ either, as $s_1^{uv},s_4^{uv}$ cannot be matched.

Hence, it only remains to show that the players of some copy $H_{uv}$ of Example~\ref{ex} cannot form a strongly blocking coalition. It is easy to see that all three of them cannot form a strongly blocking coalition. Also, no player among them can match $6$ of their vertices alone to improve. Finally, for any two players, at least 2 of their vertices will remain unmatched in any matching, but at most $3$ of them are unmatched in $M'$, so they cannot both improve.
\end{proof}
\begin{figure}[h]
\begin{center}
\includegraphics[scale=.6]{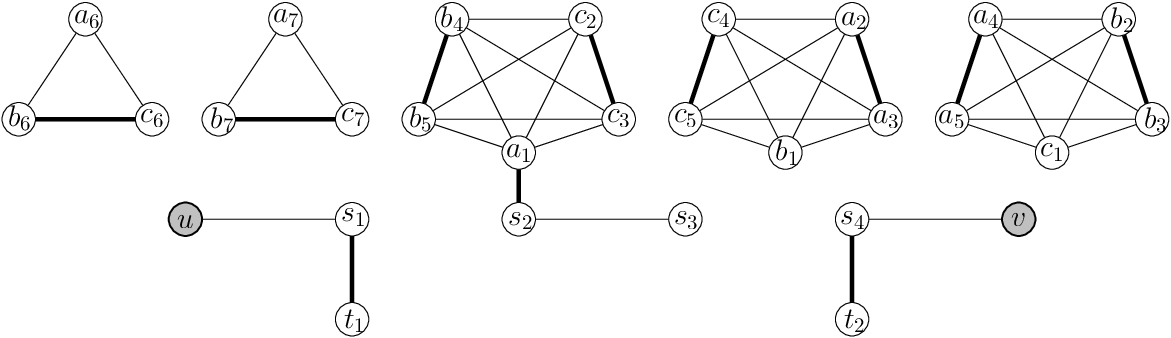}
\caption{The edges in a \NO-gadget in $M'$ in the proof of Lemma~\ref{special}. Bold edges belong to $M'$.}
\label{Mspec}
\end{center}
\end{figure}

\begin{theorem}\label{nphard}
\weakex-7 is NP-hard.
\end{theorem}
\begin{proof}
By Lemma~\ref{special}, it is enough to show NP-hardness of \weakexforb, with players of size at most 7.

We reduce from 3-SAT, where we can assume that each variable appears at most four times. Given an instance $I$ of 3-SAT, we construct an instance $I'$ of the \weakexforb. For every variable of $I$, we construct a variable gadget, and for every clause we construct a clause gadget. After that, we describe the interconnecting edges between the clause gadgets and the variable gadgets.

\textbf{Variable gadget.} For a variable $x_i$, we define four sets of vertices, $X_i$, $\bar{X_i}$, $Y_i$ and $\bar{Y_i}$. $X_i$ contains a vertex for every occurrence of the variable $x_i$ in unnegated form, and the vertices of $X_i$ belong to one player. $\bar{X_i}$ contains a vertex for every occurrence of the variable $x_i$ in negated form, and the vertices of $\bar{X_i}$ belong to one player. For every vertex in $X_i$, there is a separate vertex in $Y_i$ so that there is an edge between them, and each vertex of $Y_i$ belongs to a separate player of size one. Similarly, for every vertex in $\bar{X_i}$, there is a separate vertex in $\bar{Y_i}$ so that there is an edge between them, and each vertex of $\bar{Y_i}$ belongs to a separate player of size one. Every vertex of $Y_i$ is connected to every vertex of $\bar{Y_i}$ by a forbidden edge in $F$. See Figure \ref{variable}.

\begin{figure}[h]
\begin{center}
\includegraphics[scale=.4]{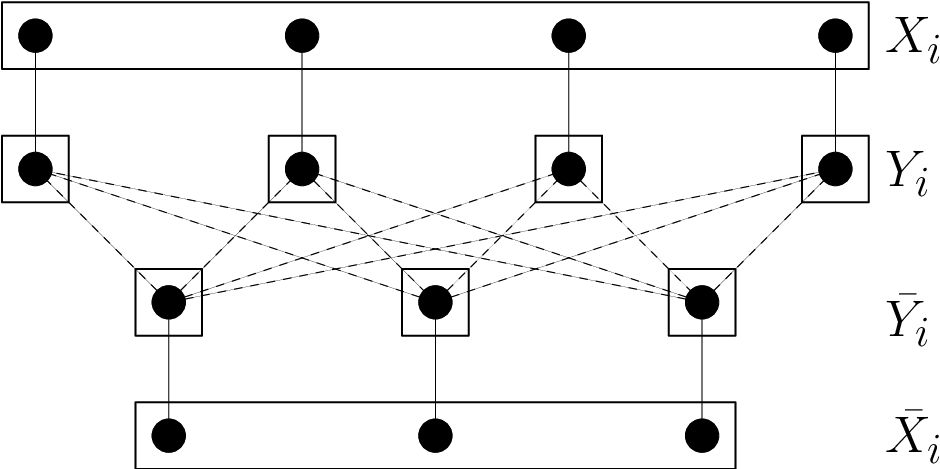}
\caption{The variable gadget for $x_i$. The dotted lines represent forbidden edges.}
\label{variable}
\end{center}
\end{figure}

\textbf{Clause gadget.} For each clause $c_j$, we associate a copy of the instance in Example \ref{ex}. We use the same notation for the vertices as in the example, but every vertex gets an upper index $j$. Let $a_1^j$, $b_1^j$ and $c_1^j$ correspond to the three literals in clause $c_j$.

\textbf{Interconnecting edges.} Now we describe the edges between the clause gadgets and the variable gadgets. To every literal that appears in a clause, there is a corresponding vertex in the clause gadget and a corresponding vertex in the variable gadget. We connect these two vertices with an edge.

As every variable appears at most four times, all players have size at most 7.

\begin{claim}
For any matching $M$ in the weak core of $I'$, there is a truth assignment that satisfies $I$.
\end{claim}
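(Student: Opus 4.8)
The plan is to show that a matching $M$ in the weak core of $I'$ forces each variable gadget into one of two ``consistent'' states, and that the clause gadgets then certify that every clause has a satisfied literal. First I would apply Lemma \ref{special} to every special edge: part i) tells us that for each special edge the internal edges $s_2a_1$, $s_1t_1$, $s_4t_2$ belong to $M$, so no special edge is ``used'' by $M$. In particular, inside a variable gadget for $x_i$, no vertex of $Y_i$ is matched to a vertex of $\bar Y_i$. Since each vertex of $Y_i$ has only one real neighbor (its partner in $X_i$) plus special edges, and similarly for $\bar Y_i$, the only way a vertex $y \in Y_i$ can be covered by $M$ is via its $X_i$-partner; the same holds for $\bar Y_i$.

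Next I would argue the ``consistency'' of a variable gadget. Consider the player on $Y_i$-vertices (each a singleton player of size one): if some $y \in Y_i$ is uncovered, that singleton player is blocking by itself unless it cannot improve — but it can always improve by taking the edge to its $X_i$-partner unless that partner is already matched to a clause-gadget vertex. So in a weak-core matching, every vertex of $X_i$ is matched either all ``down'' to $Y_i$ or the corresponding $Y_i$-vertex is uncovered, which would be blocked; the upshot is that in each variable gadget, the $X_i$-player and the $\bar X_i$-player cannot both send vertices ``up'' to clause gadgets in a conflicting way — more precisely, I would show that $M$ cannot simultaneously match a vertex of $X_i$ to a clause gadget and a vertex of $\bar X_i$ to a clause gadget, because that would leave the corresponding $Y_i$- and $\bar Y_i$-singletons exposed and at least one of them could improve, or one could use Lemma \ref{special} ii) across a special edge joining $Y_i$ to $\bar Y_i$ to produce a strongly blocking coalition. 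Hence I can define a truth assignment: set $x_i$ true if (say) no vertex of $X_i$ is matched to a clause gadget, false otherwise, so that for each $i$ at most one of the literal-players $X_i, \bar X_i$ sends vertices to clause gadgets, and that one corresponds to the ``false'' literal.

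Then I would handle the clause gadgets. Each clause gadget is a copy of the instance in Example \ref{ex}, which has empty weak core on its own; the only extra edges are the three interconnecting edges from $a_1^j, b_1^j, c_1^j$ to the variable gadgets. By the analysis of Example \ref{ex}, for the clause-gadget players (the three seven-vertex players restricted to that copy, or rather their global versions) not to form a strongly blocking coalition using only the copy's internal edges, at least one of $a_1^j, b_1^j, c_1^j$ must be matched by $M$ to the outside, i.e.\ to a variable gadget — this is because a max-size matching inside the copy covers only $16$ of the $21$ vertices and the counting argument in Example \ref{ex} shows some two players block unless extra coverage comes from outside. Matching $a_1^j$ (say) to the outside means the literal of clause $c_j$ corresponding to $a_1^j$ has its $X$- or $\bar X$-player sending a vertex to the clause, which by our definition of the assignment means that literal is \emph{true}. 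Thus every clause has a true literal, so the assignment satisfies $I$.

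The main obstacle I expect is making the clause-gadget counting argument fully rigorous in the presence of the interconnecting edges: one must show that if \emph{none} of $a_1^j,b_1^j,c_1^j$ is matched outward (or more subtly, if the outward matching does not actually help the clause-gadget players improve), then the exact blocking coalition from Example \ref{ex} still applies inside the copy, and conversely that matching one of them outward genuinely removes all blocking coalitions confined to that copy — which requires re-running the case analysis of Example \ref{ex} with one or more of $a_1^j,b_1^j,c_1^j$ forced to be covered. A secondary delicate point is ruling out ``mixed'' blocking coalitions that span a variable gadget and a clause gadget; here Lemma \ref{special} ii) is exactly the tool designed to collapse such coalitions back into ones involving the special-edge player $S$, and I would invoke it to reduce to the within-gadget cases. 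Careful bookkeeping of which singleton players can improve, and of the fact that weak-core membership depends only on coverage counts, should close these gaps.
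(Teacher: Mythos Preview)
Your overall strategy matches the paper's: use Lemma~\ref{special} to force consistency in each variable gadget, define a truth assignment, and use Example~\ref{ex} to force each clause gadget to send out an interconnecting edge. However, there are two concrete errors and one misplaced concern.

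First, the claim that an uncovered singleton $y\in Y_i$ ``is blocking by itself'' is wrong: a coalition consisting of a single one-vertex player induces an edgeless subgraph, so it can never strictly improve. The correct argument (which you do mention as an alternative) is the one the paper uses: if some $u\in Y_i$ and some $v\in \bar Y_i$ are both uncovered, then Lemma~\ref{special}~ii) applied to the special edge between $u$ and $v$ yields a strongly blocking coalition. This directly gives that all of $Y_i$ or all of $\bar Y_i$ is covered, and hence (since each $Y_i$-vertex has only its $X_i$-partner as a real neighbour) all of $X_i$ is matched into $Y_i$, or all of $\bar X_i$ is matched into $\bar Y_i$.

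Second, your truth assignment is internally inconsistent. You write that the literal-player \emph{sending} vertices to clause gadgets ``corresponds to the `false' literal'', but then in the clause paragraph you conclude that the literal whose player sends a vertex to the clause ``is true''. The paper's (correct) convention is: if all of $Y_i$ is covered, set $x_i$ false; then $X_i$ sends nothing to clauses, so any interconnecting edge from a clause goes to a player corresponding to a \emph{true} literal. You need to flip the polarity in your variable paragraph.

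Finally, the obstacles you list at the end --- showing that matching one of $a_1^j,b_1^j,c_1^j$ outward removes all internal blocking coalitions, and ruling out mixed coalitions spanning gadgets --- belong to the \emph{other} direction of the reduction. For the present claim you only need: if no interconnecting edge of clause $j$ is in $M$, then the three players of that clause gadget form a strongly blocking coalition (immediate from Example~\ref{ex}), and if an interconnecting edge is in $M$, the corresponding literal is true under the assignment. Nothing more is required here.
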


\begin{proof}
If $M$ is in the weak core, then $Y_i$ or $\bar{Y_i}$ has all its vertices covered by $M$. Indeed, if there were a vertex $u\in Y_i$ and a vertex $v\in\bar{Y_i}$ so that neither is covered by $M$, then the players $\{u\}$ and $\{v\}$ together with the forbidden edge connecting them would form a strongly blocking coalition (see Lemma \ref{special}). Since forbidden edges cannot belong to $M$, every vertex of $X_i$ (or $\bar{X_i}$) is matched to a vertex in $Y_i$ (or $\bar{Y_i}$). In the first case, we set the variable $x_i$ to be false, and in the second case we set it to be true. If both hold simultaneously, we arbitrarily set $x_i$ to be true or false, but it still holds that if $x_i$ is set to false, every vertex in $X_i$ is matched to a vertex in $Y_i$ in $M$.

A clause gadget is a copy of the instance in Example \ref{ex}, therefore without the additional (interconnecting) edges, it cannot admit a matching in the weak core, thus $M$ has to contain an interconnecting edge leaving this clause gadget. The interconnecting edge corresponds to a literal that cannot be set to false, because then the vertex corresponding to it in $X_i$ or $\bar{X_i}$ for some $i$ would be matched to a vertex in $Y_i$ or $\bar{Y_i}$ in $M$. This means that in this truth assignment, every clause contains a literal set to true, thus it satisfies $I$.
\end{proof}

\begin{claim}
For any truth assignment satisfying $I$, it is possible to construct a matching $M$ in the weak core of $I'$ without any forbidden edges.
\end{claim}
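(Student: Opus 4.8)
The plan is to do the reverse direction: starting from a satisfying truth assignment of $I$, explicitly build a matching $M$ of $I'$ and verify via Lemma \ref{2weak}-style reasoning (in fact directly, since the partition classes are large here) that no strongly blocking coalition exists. Concretely, for each variable $x_i$, if $x_i$ is true I would match every vertex of $\bar X_i$ to its partner in $\bar Y_i$, and if $x_i$ is false I would match every vertex of $X_i$ to its partner in $Y_i$; inside each special edge I always take $s_2a_1$, $s_1t_1$, $s_4t_2$ (never $us_1$ or $s_4v$), which is exactly the configuration forced by Lemma \ref{special}(i). Then, within each copy $H$ of Example \ref{ex} (the core of each variable gadget's special edges as well as each clause gadget), I put down a fixed maximum-size matching of $H$ covering $16$ of its $21$ vertices, chosen so that each of the three players has exactly $4$, $5$ or $7$ vertices covered in the way Example \ref{ex} describes — but for a clause gadget, I first use an interconnecting edge $a_1^j u$ (where $u$ lies in the appropriate $X_i$ or $\bar X_i$ corresponding to a literal set to true in $c_j$, which is unmatched so far precisely because that literal is true), and only then complete a maximum matching of the remaining $H\setminus\{a_1^j\}$.

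Having described $M$, the key step is to show $M$ is in the weak core, i.e.\ there is no strongly blocking coalition. I would argue by cases on which players a hypothetical strongly blocking coalition $\mathcal P$ with witness $M'$ could contain. The $Y_i$- and $\bar Y_i$-singletons: one whole side is fully covered by $M$, so those singletons cannot strictly improve and cannot be in $\mathcal P$; for the other side, strict improvement would require their special-edge partners, but Lemma \ref{special}(ii) says using a special edge in $M'$ forces $\mathcal P$ (together with the $S$-player of that gadget) not to be strongly blocking after all, because $S$ cannot simultaneously improve — this is the crux and where Lemma \ref{special} does the real work. The $X_i$ and $\bar X_i$ players each have all their vertices covered in $M$ already (the true side is covered by interconnecting edges into satisfied clauses plus, if needed, into $Y$; the false side is covered by $Y$-edges), so they cannot improve either, which kills the possibility of a variable-gadget coalition. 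Finally, any coalition meeting a clause gadget $H^j$: the three $H^j$-players $A,B,C$ restricted to $H^j$ alone cannot all strictly improve over a maximum-size matching (this is the content of Example \ref{ex}), so $\mathcal P$ would have to use an interconnecting edge; but each such edge leads into an $X_i/\bar X_i$ player all of whose vertices are already covered, so that player cannot be in $\mathcal P$, and then $\mathcal P \cap H^j$ is again confined to $H^j$ and fails by Example \ref{ex}.

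The main obstacle I expect is the bookkeeping around the clause gadgets and interconnecting edges: I must ensure that the fixed maximum matching chosen inside each $H^j$ really does leave the player-coverage profile $(4,5,7)$ or $(5,5,6)$ \emph{unblockable from inside} while simultaneously being compatible with whichever one interconnecting edge at $a_1^j$, $b_1^j$ or $c_1^j$ we decided to use — and with the fact that on the variable side that interconnecting edge's endpoint is genuinely free. There is also a subtlety when a literal is set true but the corresponding $X_i$ (or $\bar X_i$) vertex is \emph{not} needed by any clause edge: then it must be covered by a $Y_i$ (resp.\ $\bar Y_i$) edge instead, and I need the counts to work out so that exactly one of $X_i$-to-$Y_i$ and $\bar X_i$-to-$\bar Y_i$ saturates a full side. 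A careful but routine case analysis handles this; once the coverage profiles are pinned down, the non-blocking verification reduces cleanly to Lemma \ref{special}(ii) and the computation already carried out in Example \ref{ex}.

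\begin{proof}[Proof of the Claim (and hence of Theorem \ref{nphard})]
Fix a satisfying assignment of $I$. For each clause $c_j$ pick one literal $\ell_j$ that is true under the assignment; let $w_j \in \{a_1^j,b_1^j,c_1^j\}$ be the corresponding vertex of the clause gadget, and let $u_j$ be the vertex of $X_i$ or $\bar X_i$ it is joined to by an interconnecting edge. We build $M$ as follows.

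First, handle the variable gadgets. If $x_i$ is true, match every vertex of $\bar X_i$ to its partner in $\bar Y_i$; if $x_i$ is false, match every vertex of $X_i$ to its partner in $Y_i$. On the ``true'' side $X_i$ (when $x_i$ is true; symmetrically $\bar X_i$ when $x_i$ is false), each vertex is either some $u_j$, to be matched by an interconnecting edge below, or otherwise we match it to its partner in $Y_i$ (resp.\ $\bar Y_i$). In particular every vertex of every $X_i$ and $\bar X_i$ is covered by $M$. Inside every special edge take $s_2a_1$, $s_1t_1$, $s_4t_2$ and complete a maximum-size matching of the copy $H$ of Example \ref{ex} attached to it that avoids $a_1$: this leaves at most one vertex of each $H$-player uncovered in each $K_5$ and covers both $K_3$'s, so every special-edge copy of $H$ has player-coverage profile $(5,5,6)$ up to permutation, none of these players being singletons of the variable gadget.

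Next, handle the clause gadgets. For clause $c_j$, add the interconnecting edge $w_j u_j$; since $\ell_j$ is true, $u_j$ was left free above, so this is legal. Then complete an arbitrary maximum-size matching of $H^j \setminus \{w_j\}$; since $H^j\setminus\{w_j\}$ has $20$ vertices and, by hypomatchability of the odd components in Example \ref{ex}, admits a matching covering $16$ of them with the profile described there, each $H^j$-player ends with $4$, $5$ or $7$ vertices of $H^j$ covered, at least $4$ each, and $w_j$'s player also has $w_j$ covered.

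We claim $M$ is in the weak core. Suppose $\mathcal P$ is a strongly blocking coalition with witness $M'$. Every $X_i$ and $\bar X_i$ player has all its vertices covered by $M$, so such players cannot be in $\mathcal P$. A singleton of $Y_i$ or $\bar Y_i$ on the fully-saturated side is covered by $M$, hence not in $\mathcal P$; a singleton $\{u\}$ on the other side can strictly improve only by using a special edge at $u$, but then, as $u$'s special-edge partner also needs the edge and by Lemma \ref{special}(ii) the player $S$ of that special edge must join $\mathcal P$ while being unable to strictly improve (its coverage in $M$ is already as large as in any matching of $S\cup T$), a contradiction; so no $Y_i$ or $\bar Y_i$ singleton is in $\mathcal P$, and no $S$ or $T$ player is either. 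Hence $\mathcal P$ consists only of $H^j$-players of clause gadgets. But the edges incident to an $H^j$-player inside $G[V(\mathcal P)]$ are either inside $H^j$ or interconnecting edges to $X_i/\bar X_i$ players, and the latter are not in $\mathcal P$; so $M'$ restricted to $V(\mathcal P)$ lies inside the union of the gadgets $H^j$ over $j$ with clause-players in $\mathcal P$, and these are vertex-disjoint. Thus $\mathcal P$ decomposes into coalitions each contained in a single $H^j$, and each such sub-coalition would be strongly blocking for $M$ restricted to $H^j$. Since $M$ restricted to $H^j$ is a maximum-size matching of $H^j$ (it covers $16$ of $21$ vertices: $w_j$ via the interconnecting edge and $15$ more inside), Example \ref{ex} shows that no subset of $\{A,B,C\}$ acting inside $H^j$ can strictly improve all its members. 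This contradiction proves that $M$ is in the weak core.
\end{proof}
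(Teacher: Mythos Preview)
Your construction follows the same overall plan as the paper's, but the proof has genuine gaps.

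The most serious is the treatment of the clause gadgets. You complete ``an arbitrary maximum-size matching of $H^j\setminus\{w_j\}$'' and then assert that ``Example~\ref{ex} shows that no subset of $\{A,B,C\}$ acting inside $H^j$ can strictly improve all its members.'' This citation is backwards: Example~\ref{ex} proves precisely that \emph{every} maximum matching of $H$ admits a strongly blocking subset of $\{A,B,C\}$. The reason the paper's $M$ is unblocked on a clause gadget is that the interconnecting edge lifts total coverage to $17$ \emph{and} the specific internal matching the paper fixes yields the profile $(5,6,6)$, for which one then checks the seven coalitions by hand (e.g.\ $G[A\cup B]$ has maximum coverage $12$ but a strict block needs $\geq 6+7=13$). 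An arbitrary maximum matching of $H^j\setminus\{w_j\}$ does not work: with $w_j=a_1^j$ the matching $\{b_4b_5,\,c_2c_3,\,a_2a_3,\,c_4c_5,\,a_4a_5,\,b_2b_3,\,a_6b_6,\,a_7b_7\}$ gives the profile $(7,6,4)$, and then $\{B,C\}$ strongly blocks, since $G[B\cup C]$ admits a matching with coverage $(7,5)$ (take $c_2b_4,\,c_3b_5,\,b_1c_4,\,b_2b_3,\,b_6c_6,\,b_7c_7$). The same issue applies to the special-edge copies of $H$, and in any case you never handle their $A,B,C$ players in the blocking argument: the jump from ``no $S$ or $T$ player is in $\mathcal P$'' to ``$\mathcal P$ consists only of $H^j$-players of clause gadgets'' skips them entirely. (Your side remarks that a max matching ``covers both $K_3$'s'' and yields profile ``$4,5$ or $7$'' are also incorrect.)

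There is a smaller error in the argument for $S$: the parenthetical ``its coverage in $M$ is already as large as in any matching of $S\cup T$'' is false, since $S$ has $3$ vertices covered in $M$ while $\{s_1t_1,s_2s_3,s_4t_2\}$ covers all $4$. The correct argument is that $T$ and the singleton $\{v\}$ at the far end of the special edge are both fully covered in $M$ and hence not in $\mathcal P$; then $s_4$ has no neighbor whose player is in $\mathcal P$, so $S$ cannot reach coverage $4$ in $M'$ and thus cannot strictly improve.
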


\begin{proof}
If a variable $x_i$ is set to true (false), let $M$ contain all the edges between $\bar{X_i}$ and $\bar{Y_i}$ (between $X_i$ and $Y_i$), and all interconnecting edges incident to $X_i$ ($\bar{X_i}$), see Figure \ref{Mvari}.

\begin{figure}[h]
\begin{center}
\includegraphics[scale=.4]{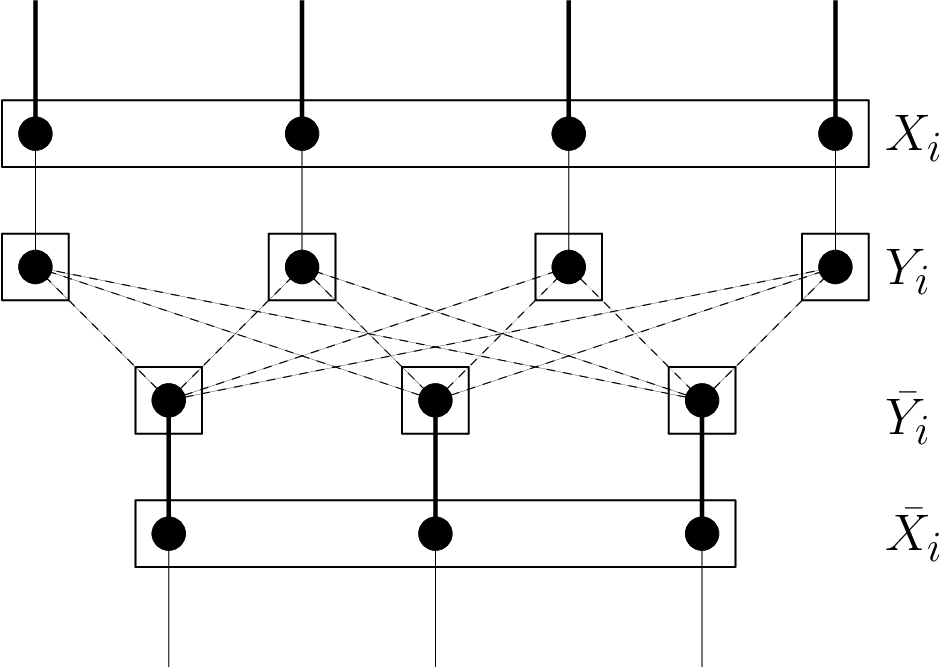}

\caption{Matching edges in the variable gadget when $x_i$ is set to true. Bold edges are in $M$. The dotted lines represent special edges.}
\label{Mvari}
\end{center}
\end{figure}

In every clause gadget, there is at least one vertex corresponding to a literal that is matched via an interconnecting edge in $M$ since every clause contains a literal set to true. Suppose that $a_1^j$ is such a vertex in the clause gadget corresponding to $c_j$ (this does not forbid that $b_1^j$ and $c_1^j$ are such vertices too). Let the edges $c_2^jc_3^j$, $b_4^jb_5^j$, $a_2^ja_3^j$, $c_4^jc_5^j$, $b_2^jb_3^j$, $a_4^ja_5^j$, $b_6^jc_6^j$ and $b_7^jc_7^j$ belong to $M$. The other two cases are symmetric.

Then, $M$ contains no forbidden edges from $F$.

Now we show for every player that it cannot belong to a strongly blocking coalition.
If the variable $x_i$ is set to true, then all the vertices in $X_i$, $\bar{X_i}$ and $\bar{Y_i}$ are matched in $M$; therefore, players $X_i$, $\bar{X_i}$ and the singleton players of $\bar{Y_i}$ cannot belong to a strongly blocking coalition. Since a vertex (which is also a player) in $Y_i$ is only connected to one of these players, it also cannot belong to a blocking coalition. If $x_i$ is set to false, a similar argument shows that these players cannot belong to a strongly blocking coalition.

Finally, similarly as in the proof of Lemma~\ref{special}, the players in the clause gadgets also cannot form a strongly blocking coalition by themselves, as at most one of them has 2 vertices unmatched, the rest have at most one.
\end{proof}
This concludes the proof of Theorem \ref{nphard}.\end{proof}

\subsection{Strong core}

We proceed to prove analogous hardness results for the strong core.

\begin{theorem} \label{thm:scmembership}
    \strongverif-3 is coNP-complete and \strongex-3 is coNP-hard. Both hardness results hold for bipartite graphs.
\end{theorem}

\begin{proof}
      We reduce from the NP-complete problem \textsc{x3c} as in the proof of Theorem \ref{thm:weakdecision}; we use the same notation as in that proof. 
    Let $I$ be an instance of \textsc{x3c}. We create an instance $I'=(G,\partition)$ of \strongex\ and an instance $I''=(G,\partition,M)$ of \strongverif\ as follows. Note that $G$ and $\partition$ will be the same in $I'$ and $I''$.
    
    First, we describe the vertex set $V$ of the $G$.
    \begin{itemize}
        \item [--] For each element $i\in [3n]$ we create an element vertex $a_i$ and a friend vertex $x_i$, except that we do not create $x_{3n}$ (only $x_1,\dots,x_{3n-1}$).
        \item [--] We create connector vertices $b_{1,2}^1,b_{1,2}^2,b_{2,3}^2,b_{2,3}^3\dots,b_{3n-2,3n-1}^{3n-2},b_{3n-2,3n-1}^{3n-1} $ and $y_{1,2}^1,y_{1,2}^2,y_{2,3}^2, y_{2,3}^3,\dots$, $y_{3n-2,3n-1}^{3n-2},y_{3n-2,3n-1}^{3n-1}$.
        \item [--] For each set $S_j\in \mathcal{S}$, we create vertices $c_j^1,c_j^2,c_j^3$ and vertices $d_j^1,d_j^2,d_j^3$. 
        
    \end{itemize}

    We now describe the edges of the graph.
    \begin{itemize}
        \item [--] For each $j\in [m]$, $l\in [3]$ we have an edge $c_j^ld_j^l$.
        \item [--] For each $i\in [3n-1]$ we have an edge $x_ia_i$ (note that there is no vertex $x_{3n}$). 
        \item [--] For each $i\in [3n-2]$, we have edges $b_{i,i+1}^ib_{i,i+1}^{i+1},y_{i,i+1}^iy_{i,i+1}^{i+1}$.
        \item [--] For each set $S_j=\{ j_1,j_2,j_3\}$ with $j_1<j_2<j_3$ we have edges $c_j^1a_{j_1},c_j^2a_{j_2},c_j^3a_{j_3}$.
    \end{itemize}

    The graph is bipartite with bipartition $U=\{ c_1^1,c_1^2,c_1^3,\dots,c_m^1,c_m^2,c_m^2,x_1,\dots,x_{3n-1}\} \cup \{ b_{i,i+1}^i,y_{i,i+1}^i\mid i \in [3n-1]\}$ and $W=V\setminus U$.

    Finally, we describe the partition $\partition$ of the players. For each $j\in [m]$, we have a player $C_j=\{ c_j^1,c_j^2,c_j^3\}$ and a player $D_j=\{ d_j^1,d_j^2,d_j^3\}$. For each $i\in \{ 2,\dots,3n-2\}$ we have a player $A_i=\{ b_{i-1,i}^i,a_i,b_{i,i+1}^i\}$ and a player $X_i=\{ y_{i-1,i}^i,x_i,y_{i,i+1}^i\}$ Finally, we have players $A_1=\{ a_1,b_{1,2}^1\}$, $X_1=\{ x_1,y_{1,2}^1\}$ and $A_{3n-1}=\{ b_{3n-2,3n-1}^{3n-1},a_{3n-1},a_{3n}\}$, $X_{3n-1}=\{ y_{3n-2,3n-1}^{3n-1},x_{3n-1}\}$. All players have size at most 3.

    We start by showing that if there is a matching $M'$ in the strong core of $I'$, then $M'$ is the matching that contains all edges of $E$, except the ones of the form $c_j^la_{j_l}$. First, observe that $M'$ must contain all edges $c_j^ld_j^l$ for $j\in [m],l\in [3]$. Suppose that there is an edge $c_j^ld_j^l$ not in $M'$. Then, player $D_j$ does not have all of its vertices covered in $M'$. However, $C_j$ and $D_j$ can create a matching among themselves that matches each of their vertices, so $\{ C_j,D_,\}$ would be a weakly blocking coalition to $M'$, a contradiction. This implies that none of the edges of the form $c_j^la_{j_l}$ are in $M'$, as the $c_j^l$ vertices are matched elsewhere. Finally, observe that all the remaining edges can be added simultaneously, hence $M'$ must contain all of them, as any strong core matching is (inclusionwise) maximal. Note that this matching $M'$ matches every vertex of $V$, except $a_{3n}$.

    We let this matching be the matching $M$ in $I''$. 
    Therefore, by showing that the matching $M$ is in the strong core of $I''$ if and only if there is no exact 3-cover in $I$, we also show that the strong core is nonempty in $I'$ if and only if there is no exact 3-cover in $I$. 

    \begin{claim}
        If there is an exact 3-cover in $I$, then $M$ is not in the strong core in $I''$.
    \end{claim}
    \begin{proof}
        Let $\{ S_{l_1},\dots S_{l_n}\} $ be an exact 3-cover in $I$. We show that the players $C_{l_1},\dots,C_{l_n},A_1,A_2,\dots,A_{3n-1}$ form a weakly blocking coalition. As $\{ S_{l_1},\dots S_{l_n} \}$ is an exact cover, there must be a perfect matching between the included $c_j^l$ and $a_i$ vertices. Also, we can include all edges of the form $b_{i,i+1}^ib_{i,i+1}^{i+1}$ for $i=1,\dots,3n-2$. Hence, there is a perfect matching on the vertex set of these players. As $A_{3n-1}$ had a vertex exposed in $M$, it follows that $M$ is not in the strong core.
    \end{proof}
    \begin{claim}
        If $M$ is not in the strong core of $I''$, then there is an exact 3-cover in $I$.
    \end{claim}
    \begin{proof}
        Suppose that $M$ is not in the strong core. This means that there is a weakly blocking coalition $\mathcal{P}$ to $M$ with a matching $M'$ in the induced graph $G[\mathcal{P}]$. As $a_{3n}$ is the only vertex that is not covered by $M$, it follows that player $A_{3n-1}$ is in the blocking coalition and the vertex $a_{3n}$ is matched. As it can only be matched to a player $C_j$ such that $a_{3n}\in S_j$, it follows that there is a player $C_j$ in the coalition too. Furthermore, one of the vertices of $C_j$ is matched to $a_{3n}$, hence $D_j$ cannot have all its vertices covered and therefore cannot be in the coalition. As $C_j$ must weakly improve, all of its vertices must be covered by $M'$. Hence, $c_j^1a_{j_1},c_j^2a_{j_2},c_j^3a_{3n}\in M'$ and thus $A_{j_1},A_{j_2}\in \mathcal{P}$. 

        Now, we make two observations. The first is that none of the players $X_1,\dots ,X_{3n-1}$ can participate in $\mathcal{P}$. Suppose for the contrary that $X_i\in \mathcal{P}$. Then, $M'$ covers $y_{i-1,i}^i,y_{i,i+1}^i$ (whenever such a vertex exists). Hence, $X_{i-1},X_{i+1}\in \mathcal{P}$ and by iterating this argument we get that $X_i\in \mathcal{P}$ for $i\in [3n-1]$. However, $a_{j_1}$ is not matched to $x_{j_1}$, hence $x_{j_1}$ is exposed in $M'$, so player $X_{j_1}$ does not weakly improve, a contradiction. The second observation is that players $A_1,\dots,A_{3n-1}$ are all in $\mathcal{P}$. We have already seen that $A_{3n-1}\in \mathcal{P}$. Therefore, to weakly improve, $b_{3n-2,3n-1}^{3n-1}$ is matched, so $b_{3n-2,3n-1}^{3n-1}b_{3n-2,3n-1}^{3n-2}\in M'$. This implies that $A_{3n-2}\in \mathcal{P}$ and by similar arguments we get that $A_{3n-3},\dots,A_1$ are in $\mathcal{P}$.

        Finally, as each player must weakly improve, every $a_i$ vertex is matched in $M'$ and by the first observation they are matched to $C_j$ players. Whenever a $C_j$ player participates in the blocking by matching one of its vertices to an $a_i$ vertex, its corresponding $D_j$ player cannot keep all its vertices covered, so it is not in the blocking coalition. This implies that whenever $C_j\in \mathcal{P}$, $\{ c_j^1a_{j_1},c_j^2a_{j_2},c_j^3a_{j_3} \} \subseteq M'$. Combining this with the fact that each $a_i$ is matched, we obtain that these players must correspond to an exact 3-cover. 
    \end{proof}
This completes the proof of the theorem.
\end{proof}

\section{Conclusions}

We studied the NTU Partitioned Matching Game as a natural model for international kidney exchange programs (IKEPs), where players represent countries or institutions and utilities represent the number of patients matched. Our results show that in the special case where each player controls two vertices -- the NTU Matching Game with Couples case -- the weak core is always nonempty, and the existence of strong core solutions can be decided in polynomial time. Moreover, while optimizing over the weak core is NP-hard, we provide a polynomial-time algorithm to optimize over the strong core. We also prove that when the number of players is constant, all core-related decision and optimization problems are solvable in polynomial time. The latter findings may yield practically useful algorithms for IKEPs, where the number of participating countries is typically small, and they suggest core allocations as a viable and robust mechanism design goal in such settings.

Several questions remain open for future research. One direction is to explore different parameterized approaches for search and optimization in the weak or strong core (e.g. with graph-theoretic parameters), given its computational hardness. Another promising direction is to investigate dynamic or multi-round settings, reflecting how kidney exchanges are implemented in practice. Extending the model to incorporate more realistic constraints -- such as donor compatibility scores, chain exchanges, or altruistic donors -- could also bridge the gap between theory and application. 

\section*{Acknowledgement}

The authors would like to thank Péter Biró and Zsuzsanna Jankó for the fruitful discussions on the topic, and the anonymous referees for their insightful comments.
Gergely Csáji is supported by the Hungarian Scientific Research Fund, ADVANCED grant no.~150556, by the Lend\"ulet Programme of the Hungarian Academy of Sciences -- grant number LP2021-2/2021 and by the Ministry of Culture and Innovation of Hungary from the National Research, Development and Innovation fund, financed under the KDP-2023 funding scheme (grant number C2258525).
Tamás Király is supported by the Lend\"ulet Programme of the Hungarian Academy of Sciences -- grant number LP2021-1/2021, by the Ministry of Innovation and Technology of Hungary from the National Research, Development and Innovation Fund -- ELTE TKP grant no.~2021-NKTA-62 and ADVANCED grant no.~150556.

\bibliographystyle{plain}
\bibliography{references}

\end{document}